\newcommand{\glabel}[1]{\{X^b\}}
\DeclareRobustCommand\encircle[1]{%
  \tikz[baseline=(X.base)] 
    \node (X) [draw, shape=circle, inner sep=0, fill=black, text=white] {\strut #1};%
}
\title{The Effect of False Positives:\\Why Fuzzy Message Detection Leads to\\ Fuzzy Privacy Guarantees?}
\titlerunning{Why Fuzzy Message Detection Leads to Fuzzy Privacy Guarantees?}
\author{István András Seres\inst{1}\and Balázs Pejó\inst{2}\and Péter Burcsi\inst{1}}
\institute{Eötvös Loránd University, Budapest, Hungary\and CrySyS Lab, HIT/VIK/BME, Hungary}
\date{\today}
\authorrunning{I.A. Seres et al.}
\begin{document}

\maketitle

\begin{abstract}
    Fuzzy Message Detection (FMD) is a recent cryptographic primitive invented by Beck et al. (CCS'21) where an untrusted server performs coarse message filtering for its clients in a recipient-anonymous way. 
    In FMD --- besides the true positive messages --- the clients download from the server their cover messages determined by their false-positive detection rates.
    What is more, within FMD, the server cannot distinguish between genuine and cover traffic.
    In this paper, we formally analyze the privacy guarantees of FMD from three different angles. 
    
    First, we analyze three privacy provisions offered by FMD: recipient unlinkability, relationship anonymity, and temporal detection ambiguity.
    Second, we perform a differential privacy analysis and coin a relaxed definition to capture the privacy guarantees FMD yields. 
    Finally, we simulate FMD on real-world communication data.
    Our theoretical and empirical results assist FMD users in adequately selecting their false-positive detection rates for various applications with given privacy requirements. 
    
\keywords{Fuzzy Message Detection \and unlinkability \and anonymity \and differential privacy \and game theory}

\end{abstract}

\section{Introduction} \label{sec:introduction}

Fuzzy Message Detection (FMD)~\cite{beckfuzzy} is a promising, very recent privacy-enhancing cryptographic  primitive that aims to provide several desired privacy properties such as recipients' anonymity. In recipient-anonymous communication systems, not even the intended recipients can tell which messages have been sent to them without decrypting all messages. The main practical drawback for the users in a recipient-anonymous scheme such as messaging and payment systems is to \emph{efficiently and privately} detect the incoming messages or transactions. Decrypting all traffic in the system leads to a private but inevitably inefficient and bandwidth-wasting scan. This challenge is tackled by FMD, which allows the users to outsource the detection of their incoming traffic to an untrusted server in an efficient and privacy-enhanced way. It is assumed that messages/transactions are posted continuously to a potentially public board, e.g., to a permissionless public blockchain. It is expected that users are intermittently connected and resource-constrained. In the FMD scheme, whenever users are online, they download their genuine transactions as well as false-positive transactions according to their custom-set false-positive detection rate. The cryptographic technique behind FMD guarantees that true and false-positive messages are indistinguishable from the server's point of view. Thus, the false-positive messages act as cover traffic for genuine messages.

The FMD protocol caught the attention of many practitioners and privacy advocates due to the protocol's applicability in numerous scenarios. In general, it supports privacy-preserving retrieval of incoming traffic from store-and-forward delivery systems. We highlight two applications currently being implemented by multiple teams and waiting to be deployed in several projects~\cite{difrancesco2021umbra,devalence2021penumbra,lewis2021niwl,rondelet2021zeth}.

\begin{itemize}
    \item \textbf{Anonymous messaging. }
    In a recipient-anonymous messaging application, the senders post their recipient-anonymous messages to a store-and-forward server. If the server employs FMD, recipients can detect their incoming (and false-positive) messages in an efficient and privacy-enhanced way. Recently, the Niwl messaging application was deployed utilizing FMD~\cite{lewis2021niwl}.
    \item \textbf{Privacy-preserving cryptocurrencies \& stealth payments. }
    In privacy-preserving cryptocurrencies, e.g., Monero~\cite{noether2015ring}, Zcash~\cite{sasson2014zerocash}, or in a privacy-enhancing overlay, payment recipients wish to detect their incoming payments without scanning the whole ledger. At the time of writing, several privacy-enhancing overlays for Ethereum (e.g., Zeth~\cite{rondelet2021zeth}, Umbra~\cite{difrancesco2021umbra}) as well as for standalone cryptocurrencies (e.g., Penumbra~\cite{devalence2021penumbra}) are actively exploring the possibility of applying FMD in their protocols.
\end{itemize}

\paragraph{Contributions. } 
Despite the rapid adoption and interest in the FMD protocol, as far as we know, there is no study analyzing the provided privacy guarantees. Consequently, it is essential to understand the privacy implications of FMD. Furthermore, it is an open question how users need to choose their false-positive detection rates to achieve an efficiency-privacy trade-off suitable for their scenario.
In this work, we make the following contributions.

\begin{itemize}
    \item \textbf{Information-Theoretical Analysis. } 
    We assess and quantify the privacy guarantees of FMD and the enhanced $k$-anonymity it provides in the context of anonymous communication systems. We focus on three notions of privacy and anonymity: relationship anonymity, recipient unlinkability, and temporal detection ambiguity. We demonstrate that FMD does not provide relationship anonymity when the server knows the senders' identity. 
    What is more, we also study relationship anonymity from a game-theoretic point of view, and show that in our simplified model at the Nash Equilibrium the users do not employ any cover traffic due to their selfishness.
    Concerning recipient unlinkability and temporal detection ambiguity, we show that they are only provided in a meaningful way when the system has numerous users and users apply considerable false-positive detection rates.
    
    \item \textbf{Differential Privacy Analysis.}
    We adopt differential privacy (DP)~\cite{dwork2006differential} for the FMD scenario and coin a new definition, called Personalized Existing Edge Differential Privacy (PEEDP). Moreover, we analyze the number of incoming messages of a user with $(\varepsilon,\delta)$-differential privacy. The uncovered trade-off between the FMD's false-positive rates and DP's parameters could help the users to determine the appropriate regimes of false-positive rates, which corresponds to the level of tolerated privacy leakage.

    \item \textbf{Simulation of FMD on Real-World Data.}
    We quantitatively evaluate the privacy guarantees of FMD through open-source simulations on real-world communication systems. We show that the untrusted server can effortlessly recover a large portion of the social graph of the communicating users, i.e., the server can break relationship anonymity for numerous users.
\end{itemize}

\paragraph{Outline. }
In Section~\ref{sec:fmd}, we provide some background on FMD, while in Section~\ref{sec:systemmodel}, we introduce our system and threat model. 
In Section~\ref{sec:privacyguarantees}, we analyze the privacy guarantees of FMD while in Section~\ref{sec:dp} we study FMD using differential privacy.
In Section~\ref{sec:experiments}, we conduct simulations on real-world communication networks, and finally, in Section~\ref{sec:conclusion}, we conclude the paper.
\section{Fuzzy Message Detection}\label{sec:fmd}

The FMD protocol seeks to provide a reasonable privacy-efficiency trade-off in use cases where \emph{recipient anonymity} needs to be protected. Users generate detection keys and send them along with their public keys to the untrusted server. Senders encrypt their messages with their recipient's public key and create flag ciphertexts using the intended recipient's public key. Detection keys allow the server to test whether a flag ciphertext gives a match for a user's public key. If yes, the server stores the message for that user identified by its public key. In particular, matched flag ciphertexts can be false-positive matches, i.e., the user cannot decrypt some matched ciphertexts. Users can decrease their false-positive rate by sending more detection keys to the server. Above all, the FMD protocol ensures \emph{correctness}; whenever a user comes online, they can retrieve their genuine messages. The \emph{fuzziness} property enforces that each other flag ciphertext is tested to be a match approximately with probability $p$ set by the recipient. 
 
Besides recipient anonymity, FMD also aims to satisfy \emph{detection ambiguity}, which requires that the server cannot distinguish between true and false-positive matching flag ciphertexts provided that ciphertexts and detection keys are honestly generated. Hence, whenever a user downloads its matched messages, false-positive messages serve as cover traffic for the genuine messages. For formal security and privacy definitions of FMD and concrete instantiations, we refer the reader to Appendix~\ref{sec:app_fmd} and ultimately to~\cite{beckfuzzy}.
To improve readability, in Table~\ref{tab:notations}, we present the variables utilized in the paper: we refer to the downloaded (genuine or cover) flag ciphertext as a fuzzy tag. 

\begin{table}[tb]
    \centering
    \begin{tabular}{c|l}
       Variable & Description \\
       \midrule
       $U$ & Number of honest users (i.e., recipients and senders) \\
       $M$ & Number of all messages sent by honest users \\
       $p(u)$ & False-positive detection rate of recipient $u$ \\
       $tag(u)$ & Number of fuzzy tags received by $u$ (i.e., genuine and false positive) \\
       $tag_{v}(u)$ & Number of fuzzy tags received by $u$ from $v$ \\
       $in(u)$ & Number of genuine  incoming messages of $u$ \\
       $out(u)$ & Number of sent messages of $u$\\
    \end{tabular}
    \caption{Notations used throughout the paper.}
    \label{tab:notations}
\end{table}

\paragraph{Privacy-efficiency trade-off. }

If user $u$'s false-positive rate is $p(u)$, it received $in(u)$ messages and the total number of messages in the system is $M$, then the server will store $tag(u)\approx in(u)+p(u)(M-in(u))$ messages for $u$. Clearly, as the number of messages stored by the server increases, so does the strength of the anonymity protection of a message.
Note the trade-off between privacy and bandwidth efficiency: larger false-positive rate $p(u)$ corresponds to stronger privacy guarantees but also to higher bandwidth as more messages need to be downloaded from the server.\footnote{Similar scenario was studied in~\cite{bianchi2012better} concerning Bloom filters.} Substantial bandwidth can be prohibitive in certain use cases, e.g., for resource-constrained clients.
Even though in the original work of Beck et al.~\cite{beckfuzzy} their FMD instantiations support a restricted subset of $[2^{-l}]_{l\in\mathbb{Z}}$ as false-positive rates, in our privacy analysis, we lift these restrictions and assume that FMD supports any false-positive rate $p\in[0,1]$.

\paragraph{Provided privacy protection. }
The anonymity protection of FMD falls under the  ``hide-in-the-crowd'' umbrella term as legitimate messages are concealed amongst cover ones. More precisely, each legitimate message enjoys an enhanced version of the well-known notion of $k$-anonymity~\cite{sweeney2002k}.\footnote{Note that Beck et al. coined this as \emph{dynamic $k$-anonymity}, yet, we believe it does not capture all the aspects of their improvement. Hence, we renamed it with a more generic term. }
In more detail, the anonymity guarantee of the FMD scheme is essentially a ``dynamic'', ``personalized'', and ``probabilistic'' extension of $k$-anonymity. 
It is dynamic because $k$ could change over time as the overall number of messages could grow.
It is personalized because $k$ might differ from user to user as each user could set their own cover detection rates differently. 
Finally, it is probabilistic because achieved $k$ may vary message-wise for a user due to the randomness of the amount of selected fuzzy messages. 

To the best of our knowledge, as of today, there has not been a formal anonymity analysis of the ``enhanced $k$-anonymity'' achieved by the FMD protocol. 
Yet, there is already a great line of research demonstrating the weaknesses and the brittle privacy guarantees achieved by $k$-anonymity~\cite{domingo2008critique,machanavajjhala2007diversity}. Intuitively, one might hope that enhanced $k$-anonymity could yield strong(er) privacy and anonymity guarantees. However, we show both theoretically and empirically and by using several tools that this enhanced $k$-anonymity fails to satisfy standard anonymity notions used in the anonymous communication literature.\footnote{
For an initial empirical anonymity analysis, we refer the reader to the simulator developed by Sarah Jamie Lewis~\cite{lewis2021simulator}.} 
\section{System and Threat Model}\label{sec:systemmodel}

\paragraph{System Model. } 
In a typical application where the FMD scheme is applied, we distinguish between the following four types of system components where the users can simultaneously be  senders and recipients.

\begin{enumerate}
    \item \textbf{Senders:} They send encrypted messages to a message board. Messages are key-private, i.e., no party other than the intended recipient can tell which public key was used to encrypt the message. Additionally, senders post flag ciphertexts associated with the messages to an untrusted server. The goal of the flag ciphertexts is to allow the server and the recipients to detect their messages in a privacy-enhanced manner.
    \item \textbf{Message Board:} It is a database that contains the senders' messages. In many applications (e.g., stealth payments), we think of the message board as a public bulletin board; i.e., everyone can read and write the board. It might be implemented as a blockchain or as a centrally managed database, e.g., as would be the case in a messaging application. In either case, we assume that the message board is always available and that its integrity is guaranteed.
    \item \textbf{Server:} It stores the detection keys of recipients. Additionally, it receives and stores flag ciphertexts from senders and tests the flag ciphertexts with the recipient's detection keys. It forwards matching flag ciphertexts and their associated data (messages, transactions, etc.) to recipients whenever they query it. Typically, flag ciphertexts match numerous recipients' public keys.\footnote{
    In this work, we stipulate that a single server filters the messages for all users, i.e., a single server knows all the recipients' detection keys. }
    \item \textbf{Recipients:} The recipient obtains matching flag ciphertexts from the server. An application-dependent message is attached as associated data to each flag ciphertext, e.g., e-mail, payment data, or instant message. The number of matching ciphertexts is proportional to the recipient's false-positive detection rate and all the messages stored by the untrusted server.
\end{enumerate}

\paragraph{Threat model.}
Our focus is on the privacy and anonymity guarantees provided by FMD. Hence, we assume that the FMD scheme is a secure cryptographic primitive, i.e., the cryptographic properties of FMD (correctness, fuzziness, and detection ambiguity) hold. Senders and recipients are assumed to be honest. Otherwise, they can be excluded from the messages' anonymity sets. We consider two types of computationally-bounded attackers that can compromise user's privacy in an FMD scheme. The adversaries' goal is to learn as much information as possible about the relationship between senders, recipients, and messages. 

\begin{itemize}
    \item \textbf{Server:} Naturally, the server providing the FMD service can endanger the user's privacy since it has continuous access to every relevant information related to message detection. Specifically, the server knows the users' false-positive rates. It can analyze each incoming message, flag ciphertext, and their corresponding anonymity sets.
    \begin{itemize}
        \item \emph{Sender-oracle.}
        The server may know the sender of each message, i.e., a sender-oracle might be available in FMD. For instance, it is mandatory for the untrusted server if it only serves registered users. 
        We assumed solely in Section~\ref{sec:relanon} that such sender-oracle is available. 
        If FMD is integrated into a system where senders apply anonymous communication (e.g., use Tor to send messages and flag ciphertexts to the server), then sender-oracle is not accessible to the FMD server. 
    \end{itemize}
    \item \textbf{Eavesdropper:} A local passive adversary might observe the amount of data each user downloads from the server. Specifically, an eavesdropper could inspect the number of flag ciphertexts each user has received. Even though this attacker type does not have continual intrusion to the server's internal state, as we will show, it can still substantially decrease FMD user's privacy, e.g., if $p(u)$ is known, then the number of genuine incoming messages of users does not enjoy sufficiently high privacy protection (see Section~\ref{sec:dp}).
\end{itemize}

\section{Privacy Guarantees in FMD}\label{sec:privacyguarantees}

In this section, we analyze and quantify various privacy and anonymity guarantees provided by the FMD scheme. 
Specifically, in Sections~\ref{sec:reclink},~\ref{sec:relanon}, and ~\ref{sec:detambiguity}, we measure recipient unlinkability, assess relationship anonymity, and estimate detection ambiguity, respectively. 
Note that for the latter two property we provide experimental evaluations in Section~\ref{sec:experiments}, and we formulate a game in Appendix~\ref{sec:gametheory} concerning relationship anonymity.
We denote the security parameter with $\lambda$, and if an (probabilistic) algorithm $A$ outputs $x$, then we write $A\xrightarrow[]{}x$ ($A\xrightarrow[]{\$}x$). 
The Binomial distribution with success probability $p$ and number of trials $n$ is denoted as $\mathsf{Binom}(n,p)$, while a normal distribution with mean $\mu$ and variance $\sigma^2$ is denoted as $\mathcal{N}(\mu,\sigma^2)$. 

\subsection{Recipient Unlinkability} \label{sec:reclink}

In anonymous communication systems, recipient unlinkability is the cornerstone of anonymity guarantees. It ensures that it is hard to distinguish whether two different messages were sent to the same recipient or different ones. 
Whenever recipient unlinkability is not attained, it facilitates possibly devastating passive attacks, i.e., if an adversary can infer which messages are sent to the same recipient, then the adversary can effortlessly launch intersection attacks, see Appendix~\ref{sec:attacks}.
In the absence of recipient unlinkability, it is also possible to efficiently map every message to its genuine recipient by 1) clustering the messages that are sent to the same recipient and 2) see the intersection of the users who downloaded the flag ciphertexts sent to the same recipient.

We consider a definition of recipient unlinkability similar to the one introduced in~\cite{backes2013anoa}. Informally, in the recipient unlinkablity game, we examine two recipients $u_0,u_1$ and a sender $u_2$. The challenger $C$ generates uniformly at random $c\xleftarrow[]{\$}\{0,1\}$ and instructs $u_2$ to send message $m_\alpha$ to $u_c$. Afterwards, $C$ draws a uniformly random bit $b\xleftarrow[]{\$}\{0,1\}$. If $b=0$, then instructs $u_2$ to send a message $m_\beta$ to $u_c$. Otherwise $u_2$ sends $m_\beta$ to $u_{1-c}$. Adversary $\mathcal{A}$ examines the network, the flag ciphertexts and all communications and outputs $b'$ indicating whether the two messages were sent to the same recipient.

\begin{definition}[Recipient unlinkability (RU)]
An anonymous communication protocol $\Pi$ satisfies recipient unlinkability if for all probabilistic polynomial-time adversaries $\mathcal{A}$ there is a negligible function $\mathsf{negl}(\cdot)$ such that
\begin{equation}
    \Pr[\mathcal{G}^{RU}_{\mathcal{A},\Pi}(\lambda)=1]\leq\frac{1}{2}+\mathsf{negl}(\lambda),
\end{equation}
where the privacy game $\mathcal{G}^{RU}_{\mathcal{A},\Pi}(\lambda)$ is defined in Figure~\ref{fig:rugame} in Appendix~\ref{sec:APPPRI}.
\end{definition}

We denote the set of users who downloaded message $m$ by $\mathit{fuzzy}(m)$, i.e., they form the anonymity set of the message $m$. 
We estimate the advantage of the following adversary $\mathcal{A}$ in the $\mathcal{G}^{RU}_{\mathcal{A},\Pi}(\lambda)$ game: $\mathcal{A}$ outputs $0$ if $\mathit{fuzzy}(m_\alpha)\cap \mathit{fuzzy}(m_\beta)\neq\emptyset$ and outputs $1$ otherwise.
Note that $\mathcal{A}$ always wins if the same recipient was chosen by the challenger (i.e., $b=0$) because it is guaranteed by the correctness of the FMD scheme that $u_c\in \mathit{fuzzy}(m_\alpha)\cap \mathit{fuzzy}(m_\beta)$. Therefore, we have that $\Pr[\mathcal{G}^{RU}_{\mathcal{A},\Pi}(\lambda)=1\vert b=0]=1$. If two different recipients were chosen by the challenger in the $\mathcal{G}^{RU}_{\mathcal{A},\Pi}(\lambda)$ game (i.e., $b=1$), then $\mathcal{A}$ wins iff. $\mathit{fuzzy}(m_\alpha)\cap \mathit{fuzzy}(m_\beta)=\emptyset$. The advantage of the adversary can be computed as follows.

\begin{equation}
\label{eq:RUadvantage}
\begin{split}
    \Pr[\mathcal{G}^{RU}_{\mathcal{A},\Pi}(\lambda)=1\vert b=1] =
    \Pr[\cap_{m\in\{m_\alpha,m_\beta\}} fuzzy(m)=\emptyset\vert b=1] = \\
    = \sum^{U}_{i=1} \sum_{\substack{V\subseteq U \\
                                    \mid V\mid = i\\
                                    \alpha \in V}} 
        \Pr[\mathit{fuzzy}(m_{\alpha}) = V] \cdot
        \Pr[(V\cap \mathit{fuzzy}(m_{\beta}) = \emptyset] =\\
      = \sum^{U}_{i=1} \sum_{\substack{V\subseteq U \\
                                    \mid V\mid = i\\
                                    \alpha \in V}} 
        \left(\prod_{u_l \in V\setminus\{u_0\}} p(u_l)\cdot
        \prod_{u_l \in U\setminus V} (1-p(u_l))\right)\cdot
        \left(\prod_{u_l \in V} (1-p(u_l))\right) .
\end{split}
\end{equation}

We simplify the adversarial advantage in Equation~\ref{eq:RUadvantage} by assuming that $\forall i:p(u_i)=p$ and that the sizes of the anonymity sets are fixed at $\lfloor pU\rfloor$. Moreover, computer-aided calculations show that the following birthday paradox-like quantity can be used as a sufficiently tight lower bound\footnote{This lower bound is practically tight since the probability distribution of the adversary's advantage is concentrated around the mean $\lfloor pU\rfloor$ anyway.} for the recipient unlinkability adversarial advantage, whenever $p(u_i)$ are close to each other. 

\begin{equation}
    \label{eq:RUadvantagesimplified}
    \begin{split}
    \prod_{j=1}^{\lfloor pU\rfloor}\frac{U-\lfloor pU\rfloor-j}{U}=\prod_{j=1}^{\lfloor pU\rfloor}\Big(1-\frac{\lfloor pU\rfloor+j}{U}\Big)\approx\prod_{j=1}^{\lfloor pU\rfloor}e^{-\frac{\lfloor pU\rfloor+j}{U}}=\\
    =e^{-\frac{\sum_{j}(\lfloor pU\rfloor +j)}{U}}=e^{-\frac{3\lfloor pU\rfloor^2+\lfloor pU\rfloor}{2U}}\leq \Pr[\mathcal{G}^{RU}_{\mathcal{A},\Pi}(\lambda)=1\vert b=1].
    \end{split}
\end{equation}

The approximation is obtained by applying the first-order Taylor-series approximation for $e^x\approx1+x$, whenever $\vert x\vert\ll 1$. We observe that the lower bound for the adversary's advantage in the recipient unlinkability game is a negligible function in $U$ for a fixed false-positive detection rate $p$. Thus, in theory, the number of recipients $U$ should be large in order to achieve recipient unlinkability asymptotically. 
In practice, the classical birthday-paradox argument shows us that the two anonymity sets intersect with constant probability if $p=\mathcal{\theta}\Big(\frac{1}{\sqrt{U}}\Big)$. 
Our results suggest that a deployment of the FMD scheme should \emph{concurrently} have a large number of users with high false-positive rates in order to provide recipient unlinkability, see Figure~\ref{fig:anonanalysissub2} for the concrete values of Equation~\ref{eq:RUadvantagesimplified}.

\begin{figure}
\centering
\begin{subfigure}{.49\textwidth}
  \centering
  \includegraphics[width=\linewidth]{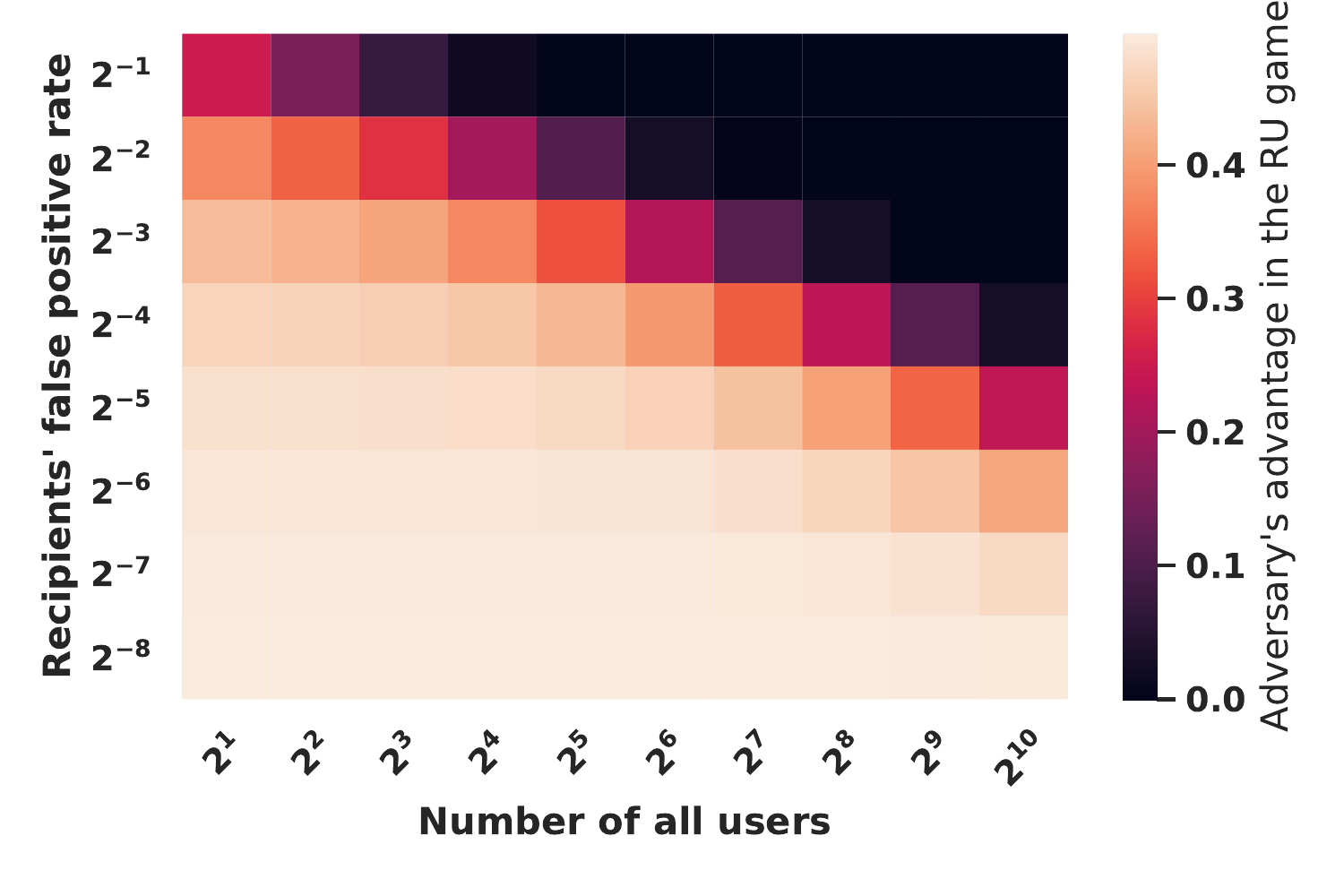}
  \caption{Approximate values of the recipient unlinkability adversarial advantage in the $\mathcal{G}_{RU}^{\mathcal{A},\Pi}(\lambda)$ game according to Equation~\ref{eq:RUadvantagesimplified}.}
  \label{fig:anonanalysissub2}
\end{subfigure}
\hfill
\begin{subfigure}{.49\textwidth}
  \centering
  \includegraphics[width=\linewidth]{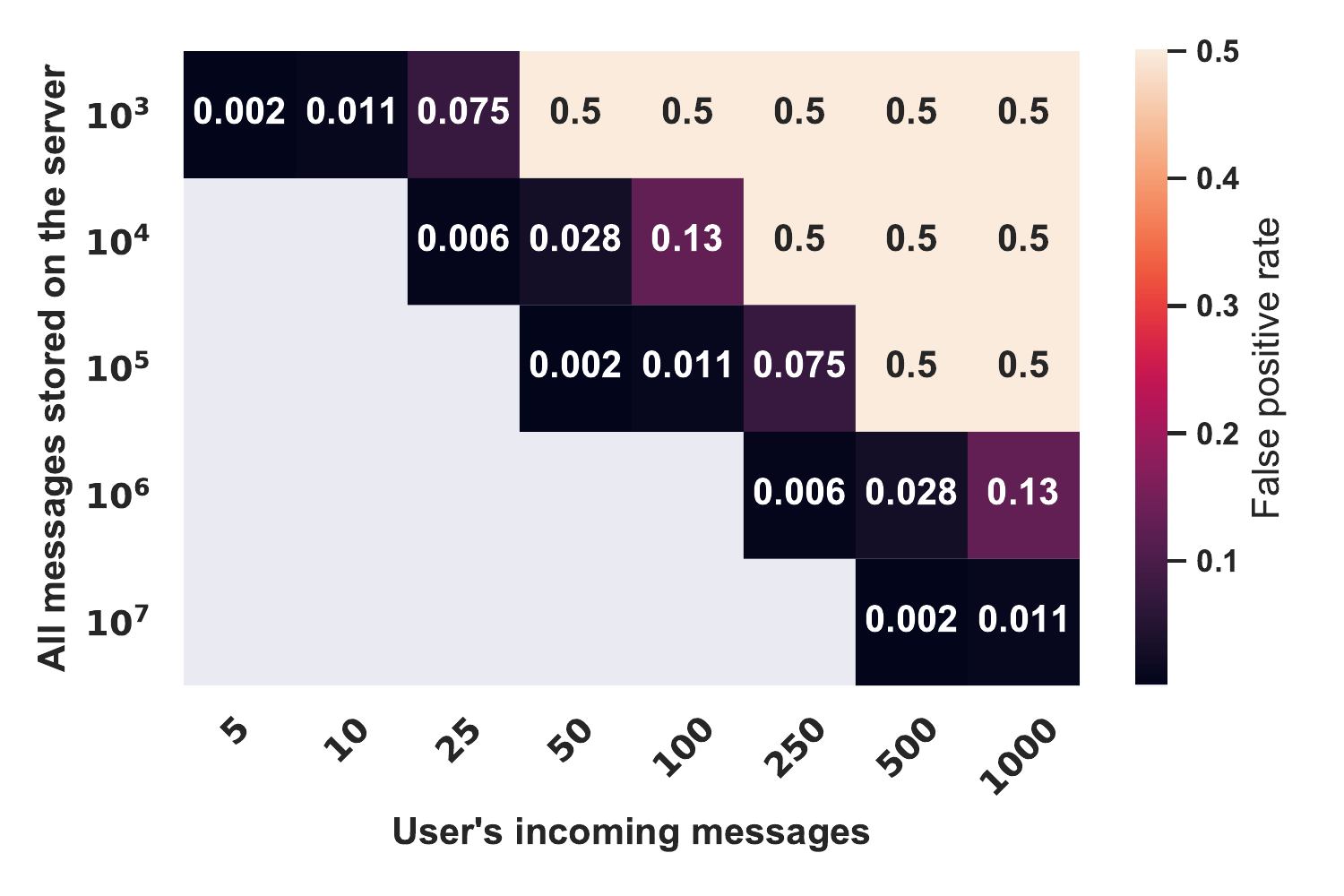}
  \caption{Smallest false-positive rates to obtain temporal detection ambiguity against the utilized statistical tests.}
  \label{fig:detectionambiguity}
\end{subfigure}
\caption{Recipient unlinkability and temporal detection ambiguity guarantees provided by the FMD scheme for various parameter settings. }
\label{fig:anonanalysis}
\end{figure}

\subsection{Relationship anonymity} \label{sec:relanon}

Relationship anonymity ensures that the adversary cannot determine the sender and the recipient of a communication at the same time. Intuitively, recipients applying low false-positive rates receive only a handful of fuzzy tags from peers they are not communicating with. Therefore, multiple fuzzy tags between a sender and a recipient can eradicate their relationship anonymity, given that the server knows the number of messages a sender issued. We assume the server knows the sender of each message, which holds whenever the untrusted server has access to a sender-oracle, see Section~\ref{sec:systemmodel}.

The number of fuzzy tags between a \emph{non-communicating pair} of reciever $u_1$ and sender $u_2$ follows $\mathsf{Binom}(\mathsf{out}(u_2),p(u_1))$. If  $tag_{u_2}(u_1)$ is saliently far from the expected mean $\mathsf{out}(u_2)p(u_1)$, then the untrusted server can deduce with high confidence that a relationship exists between the two users.
We approximate the binomial distribution above with $\mathcal{N}^{*}:=\mathcal{N}(\mathsf{out}(u_2) p(u_1),\mathsf{out}(u_2) p(u_1)(1-p(u_1)))$\footnote{Note that this approximation is generally considered to be tight enough when $\mathsf{out}(u_2)p(u_1)\geq5$ and $\mathsf{out}(u_2)(1-p(u_1))\geq5$. } so we can apply $Z$-tests to determine whether $u_2$ and $u_1$ had exchanged messages.
Concretely, we apply two-tailed $Z$-test\footnote{For senders with only a few sent messages ($\mathsf{out}(u_2)\leq 30$), one can apply $t$-tests instead of $Z$-tests.} for the hypothesis $H: tag_{u_2}(u_1)$ $\sim$ $\mathcal{N}(\mathsf{out}(u_2) p(u_1),\mathsf{out}(u_2) p(u_1)(1-p(u_1)))$. If the hypothesis is rejected, then users $u_2$ and $u_1$ are deemed to have exchanged messages.

Each recipient $u_1$ downloads on average $tag_{u_2}(u_1)\approx p(u_1)(\mathit{out}(u_2)-\mathit{in}_{u_2}(u_1))+\mathit{in}_{u_2}(u_1)$ fuzzy messages from the messages sent by $u_2$, where $\mathit{in}_{u_2}(u_1)$ denotes the number of genuine messages sent by $u_2$ to $u_1$. We statistically test with Z-tests (when $100\leq\mathsf{out}(u_2)$) and t-tests (when $\mathsf{out}(u_2)\leq30$) whether $tag_{u_2}(u_1)$ could have been drawn from the $\mathcal{N}^{*}$ distribution, i.e., there are no exchanged messages between $u_1$ and $u_2$. The minimum number of genuine messages $\mathit{in}_{u_2}(u_1)$ that statistically expose the communication relationship between $u_1$ and $u_2$ is shown for various scenarios in Figure~\ref{fig:relanonymity2}. We observe that the relationship anonymity of any pair of users could be broken by a handful of exchanged messages. This severely limits the applicability of the FMD scheme in use cases such as instant messaging. 
To have a meaningful level of relationship anonymity with their communicating peer, users should either apply substantial false-positive rates, or the server must not be able to learn the sender's identity of each message.
The latter could be achieved, for instance, if senders apply an anonymous communication system to send messages or by using short-lived provisional pseudo IDs where no user would send more than one message. 

\begin{figure}[t]
\centering
\begin{subfigure}{.49\textwidth}
  \centering
  \includegraphics[width=\linewidth]{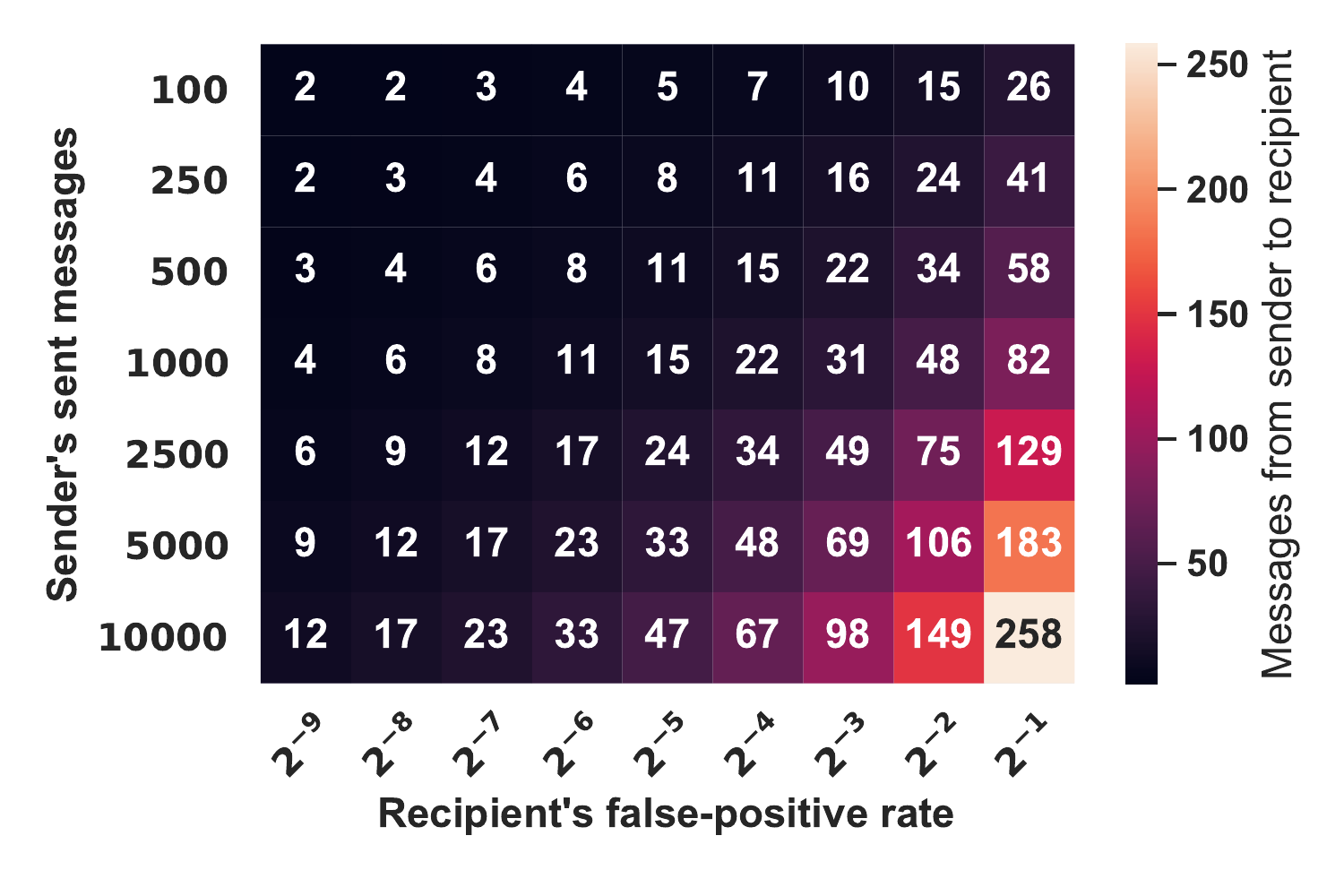}
  \caption{\centering Relationship anonymity for \linebreak
  $100\leq\mathsf{out}(s)$.}
  \label{fig:relanonymityA}
\end{subfigure}
\hfill
\begin{subfigure}{.49\textwidth}
  \centering
  \includegraphics[width=\linewidth]{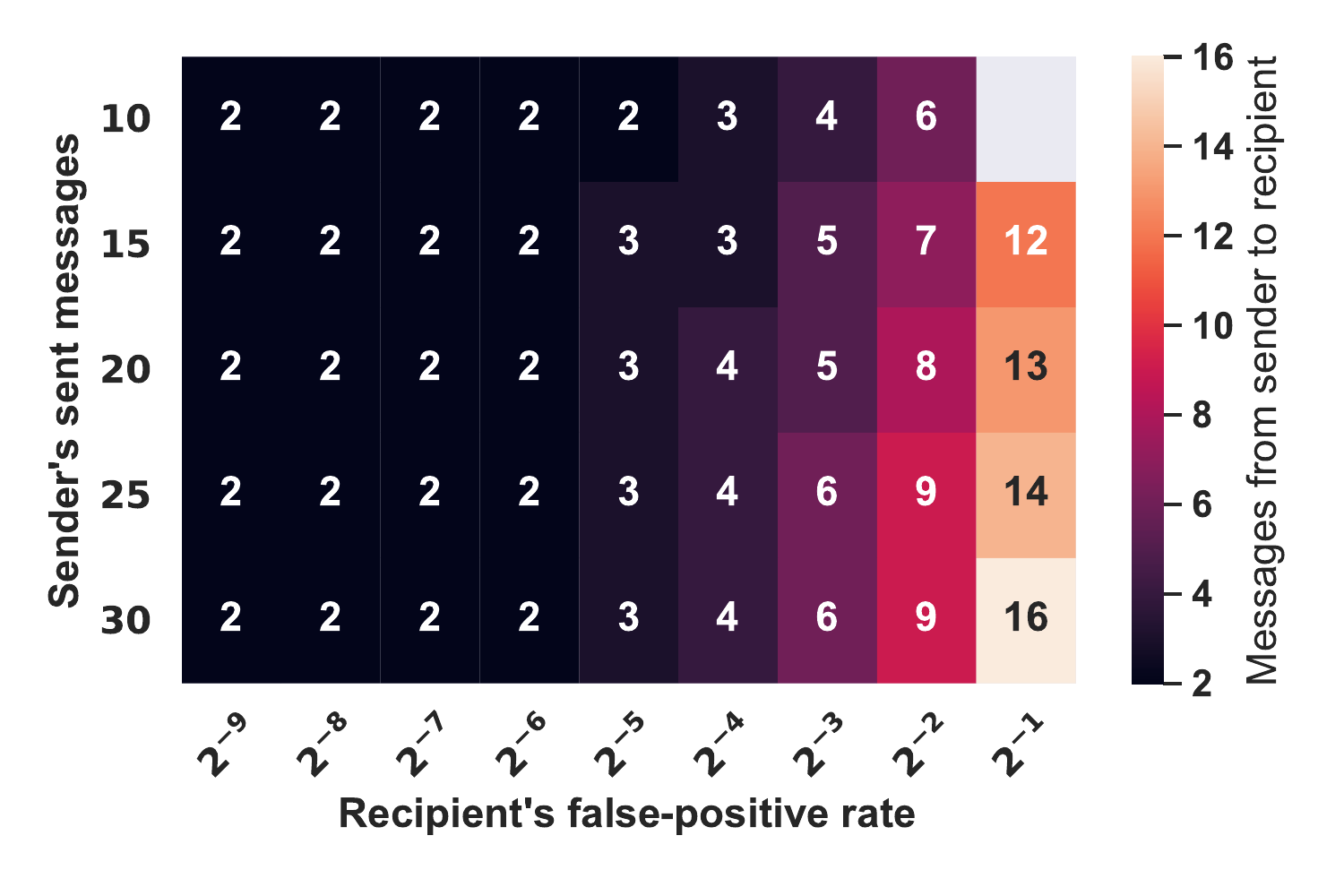}
  \caption{\centering Relationship anonymity for \linebreak $\mathsf{out}(s)\leq30$.}
  \label{fig:relanonymityB}
\end{subfigure}
\caption{The minimum number of messages between a pair of users that statistically reveal the relationship of the communicating users (significance level $1\%$). }
\label{fig:relanonymity2}
\end{figure}

\paragraph{Game Theoretic Analysis.}
Incentive compatibility has the utmost importance in decentralized and privacy-enhancing technologies. Therefore, we present a game-theoretic study of the FMD protocol concerning relationship anonymity. We believe applying game theory to FMD by itself is a fruitful and over-arching direction. Our goal, besides conducting a preliminary analysis, is to raise interest and fuel future research towards this direction. The formalization of the game as well as the corresponding theorems and proofs can be found in Appendix~\ref{sec:gametheory}.


\subsection{Temporal detection ambiguity}\label{sec:detambiguity}

The FMD scheme is required to satisfy the security notion of \emph{detection ambiguity} devised by Beck et al.~\cite{beckfuzzy}. Namely, for any message that yields a match for a detection key, the server should not be able to decide whether it is a true or a false-positive match. This definition is formalized for a single incoming message in isolation. Yet, the detector server can \emph{continuously} observe the stream of incoming messages.\footnote{As an illustrative example collected from a real communication system, see Figure~\ref{fig:detectionambiguityinanepoch}. }
Consequently, the server might be able to assess whether the user has received a message in \emph{a certain time interval}. 
To capture this time-dependent aspect, we relax detection ambiguity and coin the term \emph{temporal detection ambiguity}. Informally, no adversary should be able to tell in a given time interval having access to all incoming flag ciphertexts whether a user received an incoming true-positive match. We provide the formal definition in Appendix~\ref{sec:APPPRI}, and we empirically study temporal detection ambiguity on real communication data in Section~\ref{sec:experiments}. In Section~\ref{sec:dp}, we measure the level of privacy protection the number of incoming messages enjoys from a differential privacy angle.

Any message that enters the communication system yields a match to a detection key according to its set false-positive rate. Specifically, the number of false-positive matches acquired by user $u$'s detection key follows a $\mathsf{Binom}(M-\mathsf{in}(u),p(u))$ distribution. Similarly to Section~\ref{sec:relanon}, if $M$ is large, then we can approximate the number of false-positive matches with a $\mathcal{N}(p(u) M,p(u) (1-p(u)) M)$ distribution and use statistical tests to assess that the number of downloaded messages by a recipient is statistically far from the expected number of downloaded messages. More precisely, the adversary can statistically test whether $tag(u)$ could have been drawn from $\mathcal{N}(p(u) M,p(u)(1-p(u))M)$ (the approximation of $\mathsf{Binom}(M,p(u))$). We observe that in an epoch, a user should have either large false-positive rates or a small number of incoming messages to provide temporal detection ambiguity, shown in Figure~\ref{fig:detectionambiguity}.
\section{Differential Privacy Analysis}\label{sec:dp}

Differential privacy (DP)~\cite{dwork2006differential} is a procedure for sharing information about a dataset by publishing statistics of it while withholding information about single data points. DP is formalized in Definition~\ref{def:DP}; the core idea is to ensure that an arbitrary change on any data point in the database has a negligible effect on the query result. Hence, it is infeasible to infer much about any data point.

\begin{definition}[Differential Privacy~\cite{dwork2006differential}]
    \label{def:DP}
    An algorithm $A$ satisfies $\varepsilon$-differential privacy if for all $S\subseteq Range(A)$ and every input pair $D$ and $D^\prime$ differing in a single element Equation \ref{eq:DP} holds.
    \begin{equation}
        \label{eq:DP}
        \Pr(A(D)\in S)\le e^\varepsilon\cdot\Pr(A(D^\prime)\in S).
    \end{equation}
\end{definition}

\paragraph{Personalized Existing Edge DP. }
A widely used relaxation of the above definition is $(\varepsilon,\delta)$-DP, where Equation~\ref{eq:DP} is extended with a small additive term $\delta$ at the right end. 
There are over 200 modifications of DP~\cite{desfontaines2020sok}, we combined several to make it suitable for FMD. Concretely, we create a novel definition called \emph{Personalized Existing Edge DP (PEEDP)} (formally defined in Definition~\ref{def:peedp})\footnote{We elaborate more on various DP notions in Appendix~\ref{sec:DPAPP}.} by combining four existing notions. We utilize \emph{edge-DP}~\cite{hay2009accurate} which applies DP to communication graphs: $D$ and $D^\prime$ are the original communication graphs with and without a particular edge respectively, and $S$ is a set of graphs with fuzzy edges included. Furthermore, we apply \emph{personalized DP}~\cite{jorgensen2015conservative}, which allocates different level of protection to incoming messages, as in FMD the users' false positive rates could differ. 

Hiding the presence or absence of a message is only possible by explicitly removing real messages and adding fuzzy ones to the communication graph, which is indistinguishable from real ones. This setting (i.e., protecting existence and not value) corresponds to \emph{unbounded DP}~\cite{kifer2011no}.
Hence, as also noted in~\cite{beckfuzzy}, without a false negative rate (which would directly contradict correctness), FMD cannot satisfy DP: fuzzy messages can only hide the presence of a message not the absence. To tackle this imbalance, we utilize \emph{asymmetric DP}~\cite{takagi2021asymmetric} which only protects some of the records determined by policy $P$. It only differs from Definition \ref{def:DP} in the relationship of $D$ and $D^\prime$ as Equation~\ref{eq:DP} should only hold for \textit{every input pair $D$ and $D^\prime$ where later is created by removing in $D$ a single sensitive record defined by $P$}. By combining all these DP notions, we can formulate our PEEDP definition. 

\begin{definition}[$\overline{\varepsilon}$- Personalized Existing Edge Differential Privacy]
    \label{def:peedp}
    An algorithm $A$ satisfies $\overline{\varepsilon}$-PEEDP (where $\overline{\varepsilon}$ is an element-wise positive vector which length is equal with the amount of nodes in $D$) if Equation \ref{eq:DP_new} holds for all $S\subseteq Range(A)$ and every input graphs $D$ and $D^\prime$ where later is created by removing in $D$ a single incoming edge of user $u$.
    \begin{equation}
        \label{eq:DP_new}
        \Pr(A(D)\in S)\le e^{\overline{\varepsilon}_u}\cdot\Pr(A(D^\prime)\in S).
    \end{equation}
\end{definition}

Once we formalized a suitable DP definition for FMD, it is easy to calculate the trade-off between efficiency (approximated by $p(u)$) and privacy protection (measured by $\varepsilon_u$). This is captured in Theorem~\ref{th:osdp} (proof can be found in Appendix~\ref{sec:DPAPP}).

\begin{theorem}
    \label{th:osdp}
    If we assume the distribution of the messages are IID then FMD satisfy $\left[\log\frac1{p(u)}\right]_{u=1}^U$-PEEDP. 
\end{theorem}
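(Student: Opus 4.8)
The plan is to unfold Definition~\ref{def:peedp} directly and to exploit that the asymmetric-DP ingredient we built in makes the requirement one-sided: it suffices to show that for every measurable $S$, every user $u$, and every pair of communication graphs $D,D'$ in which $D'$ is obtained from $D$ by deleting one genuine incoming edge of $u$, we have $\Pr(A(D)\in S)\le \tfrac{1}{p(u)}\Pr(A(D')\in S)$. Here $A$ is the FMD tagging map: every genuine message becomes a fuzzy tag for its recipient, and, independently, every flag ciphertext in the stream becomes a false-positive tag for each user $v$ with probability $p(v)$. We never need the reverse inequality — that direction would require hiding the \emph{insertion} of a message, which contradicts correctness, and it is exactly the imbalance flagged before the theorem.

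First I would reduce the full output to a single user and a single message. Let $m^\ast$ denote the deleted edge. Because the messages are IID and the false-positive events are mutually independent across messages and users, $A(D)$ factors into the (independent) contribution of $m^\ast$ and the contribution of all other messages; the latter has the same law under $D$ and $D'$, and the contribution of $m^\ast$ to any $v\ne u$ is $\mathsf{Binom}(1,p(v))$ under both graphs, since deleting the edge $(m^\ast,u)$ only detaches $m^\ast$ from $u$ — the flag ciphertext stays in the stream, so $M$ is unchanged and no $v\ne u$ is affected. Conditioning on everything except the tags received by $u$ and marginalising then reduces the claim for $A$ to the claim for the single count $tag(u)$, one conditional slice at a time; this is where the IID/independence structure is actually used.

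Next comes the computation. Under $D$ we have $tag(u)\sim in(u)+\mathsf{Binom}(M-in(u),p(u))$, whereas under $D'$ we have $tag(u)\sim (in(u)-1)+\mathsf{Binom}(M-in(u)+1,p(u))$, since $u$ loses one guaranteed tag but gains one extra message that can hit it as a false positive. Writing $p=p(u)$ and $n=M-in(u)$, for $t\in\{in(u),\dots,M\}$ the ratio $\Pr_D(tag(u)=t)/\Pr_{D'}(tag(u)=t)$ equals $\tfrac1p\binom{n}{t-in(u)}/\binom{n+1}{t-in(u)+1}$, and the identity $\binom{n+1}{j+1}=\tfrac{n+1}{j+1}\binom{n}{j}$ collapses this to $\tfrac{t-in(u)+1}{(n+1)p}\le\tfrac1p$, with equality at $t=M$. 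Summing over $t\in S$ yields $\Pr_D(tag(u)\in S)\le \tfrac{1}{p(u)}\Pr_{D'}(tag(u)\in S)$, i.e.\ $\overline{\varepsilon}_u=\log\tfrac{1}{p(u)}$, and the extremal event $\{tag(u)=M\}$ (the recipient downloads every message) shows the constant is tight.

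The main obstacle is conceptual, not computational: pinning down the right notion of neighbouring graphs. If deleting an incoming edge of $u$ were read as also removing the flag ciphertext — so that $M$ drops by one — then $tag(u)$ under $D$ would stochastically dominate $tag(u)$ under $D'$ by exactly $+1$, the event $\{tag(u)=M\}$ would have positive probability under $D$ but probability zero under $D'$, and the ratio would be infinite, making the statement false. The unbounded/asymmetric reading, in which the tag stream is held fixed and only the genuine-versus-false-positive status of $m^\ast$ toward $u$ changes, is precisely what makes both the theorem and the clean constant $\log(1/p(u))$ go through; the same point-mass comparison, after discarding the negligible-probability tail near $t=M$, is what will then give the $(\varepsilon,\delta)$ refinement used in the analysis of the number of incoming messages. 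The rest — phrasing the slice-by-slice lifting from $tag(u)$ to the whole fuzzy graph — is routine.
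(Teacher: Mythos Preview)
Your proposal lands on the right constant and, crucially, reads the neighbouring-graph relation the same way the paper does: the flag ciphertext of $m^\ast$ stays in the stream and only its genuine/false-positive status toward $u$ flips. That interpretive point is indeed the heart of the matter, and your discussion of why the alternative reading (dropping $M$ by one) would blow up the ratio is exactly right.

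The paper's own argument is shorter. After the same factorisation you describe in your first paragraph, it does not pass to the count $tag(u)$ at all: it observes that the only coordinate of the output whose law changes is the single indicator ``is $m^\ast$ downloaded by $u$'', which equals $1$ under $D$ and is $\mathsf{Binom}(1,p(u))$ under $D'$. The two atomic events $S_1=\{m^\ast\text{ downloaded by }u\}$ and $S_2=\{m^\ast\text{ not downloaded by }u\}$ give ratios $1/p(u)$ and $0/(1-p(u))$, and $\overline\varepsilon_u=\log(1/p(u))$ drops out in one line.

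Your binomial-ratio computation for $tag(u)$ is correct and recovers the same $1/p(u)$ (with the pleasant tightness observation at $t=M$), but it is a detour, and the reduction as you phrase it has a small gap: bounding the PEEDP ratio for the \emph{marginal count} $tag(u)$ is necessary but not sufficient for bounding it on the full fuzzy-tag output $A$. Your own factorisation already isolates a single Bernoulli coordinate; conditioning slice-by-slice on everything else lands you there, not on the unconditional binomial. In effect you have imported the computation that belongs to Theorem~\ref{th:DP} (the $(\varepsilon,\delta)$ analysis of $in(u)$ via $tag(u)$) into Theorem~\ref{th:osdp}, where the one-indicator comparison already suffices. The link you draw between the two theorems is apt; the binomial work simply lives in the second, not the first.
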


Therefore, detection rates $p(u)=\{0.5^0,0.5^1,0.5^2,0.5^4,0.5^8\}$ in FMD correspond to $\varepsilon_u=\{0.000,0.693,1.386,2.773,5.545\}$ in $\overline{\varepsilon}$-PEEDP. 
Clearly, perfect protection (i.e., $\varepsilon_u=0$) is reached only when all messages are downloaded (i.e., $p(u)=1$). On the other hand, the other $\varepsilon$ values are much harder to grasp: generally speaking, privacy-parameter below one is considered strong with the classic DP definition. As PEEDP only provides a relaxed guarantee, we can postulate that the privacy protection what FMD offers is weak.

\paragraph{Protecting the Number of Incoming Messages. }
In most applications, e.g., anonymous messaging or stealth payments, we want to protect the number of incoming messages of the users, $in(u)$. 
Intuitively, the server observes $tag(u)\sim in(u)+\mathsf{Binom}(M-in(u),p(u))$ where (with sufficiently large $M$) the second term can be thought of as Gaussian-noise added to mask $in(u)$, a common technique to achieve $(\varepsilon,\delta)$-DP. 
Consequently, FMD does provide $(\varepsilon_u,\delta_u)$-DP\footnote{Note that this is also a personalized guarantee as in~\cite{jorgensen2015conservative}.} for the number of incoming messages of user $u$, see Theorem~\ref{th:DP} (proof in Appendix~\ref{sec:DPAPP}).

\begin{theorem}
    \label{th:DP}
    If we assume the distribution of the messages is IID than the FMD protocol provides $(\varepsilon_u,\delta_u)$-DP for the number of incoming messages $in(u)$ of user $u$ where $\delta_u=\max_u(p(u), 1-p(u))^{M-in(u)}$ and
    \begin{equation*}
        \varepsilon_u=\log\left[\max_u\left(\frac{p(u)\cdot(M-2\cdot in(u))}{(1-p(u))\cdot(in(u)+1)},\frac{(1-p(u))\cdot(M-in(u))}{p(u)}\right)\right].
    \end{equation*}
\end{theorem}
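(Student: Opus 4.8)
The plan is to read the mechanism off the sentence preceding the statement: $A$ maps the communication graph to what the server sees, and the only $u$-relevant part of that view is $tag(u)$, which under the IID assumption is distributed as $in(u)+\mathsf{Binom}(M-in(u),p(u))$. Fix $u$, write $n=in(u)$, $p=p(u)$, $N=M-n$, and let $D,D'$ differ in a single incoming edge of $u$, so that on one graph the server sees $n+X$ and on the other $(n-1)+X'$, with $X,X'$ binomial and their parameters shifted by one; in either case $\Pr[A(D)=k]$ and $\Pr[A(D')=k]$ are binomial point masses evaluated at consecutive arguments. The first step is to put the privacy loss $L(k)=\log\!\big(\Pr[A(D)=k]/\Pr[A(D')=k]\big)$ in closed form via the elementary recursion $\Pr[\mathsf{Binom}(N,p)=j+1]/\Pr[\mathsf{Binom}(N,p)=j]=(N-j)p/((j+1)(1-p))$. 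This exhibits $L(k)$ as the logarithm of a ratio of two affine functions of $k$, hence a monotone function of $k$ on the common support, so the extreme privacy losses --- of either sign --- are attained at the two ends of that support.

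The second step is to split the output range into a ``bad'' part $B$ and a ``good'' part $G$. Into $B$ go the extreme outputs, where the shift by one pushes the argument past the end of a binomial's support and one of the two distributions vanishes (these have probability $(1-p)^{M-in(u)}$ and $p^{M-in(u)}$), together with the lower outliers $tag(u)\lesssim 2\,in(u)$; the $A(D)$-mass of $B$ is then controlled by $\delta_u=\max(p,1-p)^{M-in(u)}$, and for large $M$ one may pass to the Gaussian approximation of the binomial, exactly as in Sections~\ref{sec:relanon} and~\ref{sec:detambiguity}, to estimate the residual tail. This is precisely the source of the additive slack: as already noted in the text, fuzzy traffic can conceal the presence but never the absence of a message, so $\delta_u$ must absorb the outputs that would unmask $in(u)$. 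On $G$ the monotone ratio attains its maximum at the two endpoints of $G$; evaluating $(N-j)p/((j+1)(1-p))$ at $j=in(u)$ gives $p(u)(M-2\,in(u))/((1-p(u))(in(u)+1))$, and evaluating the reciprocal form at $j=M-in(u)-1$ gives $(1-p(u))(M-in(u))/p(u)$. Taking the larger of the two (which simultaneously covers both orderings of the neighbouring pair) yields $e^{\varepsilon_u}$, and combining with the bound on $\Pr[A(D)\in B]$ gives $(\varepsilon_u,\delta_u)$-DP; since each user chooses $p(u)$ independently the guarantee is personalised, and the $\max_u$ merely records a single pair valid for all users at once. This refines Theorem~\ref{th:osdp}, which corresponds to conceding nothing to $B$ and simply taking $\varepsilon_u=\log(1/p(u))$, $\delta_u=0$.

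The step I expect to be the crux is the calibration of $B$ against $G$. Because the binomial likelihood ratio genuinely diverges at the boundary of the support, one has to concede exactly the right outputs to $\delta_u$ --- the two extreme point masses and the lower outliers --- and then check that the conceded $A(D)$-mass is still at most $\max(p,1-p)^{M-in(u)}$, while at the same time keeping the endpoints of $G$ close enough to the bulk that the two candidates in the $\max$ come out with the sharp constants $M-2\,in(u)$ and $in(u)+1$ rather than with looser ones. Tracking the shift-by-one against the finite support, and making sure ``removing an incoming edge'' is applied consistently to both $in(u)$ and the total message count $M$, is where the care is needed; everything else is routine manipulation of binomial coefficients and, if one prefers, a textbook Gaussian-tail estimate licensed by the IID assumption.
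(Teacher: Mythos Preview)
Your skeleton is exactly the paper's: model $tag(u)=in(u)+\mathsf{Binom}(M-in(u),p(u))$ with $M$ shifting alongside $in(u)$ so that the binomial parameters stay fixed on the neighbouring graph, compute the likelihood ratio via the one-step recursion $\binom{N}{y}/\binom{N}{y-1}=(N-y+1)/y$, charge the two boundary singletons $S=\{in(u)\}$ and $S=\{M\}$ (where the neighbour has zero mass) to $\delta_u$, and read off the two terms in $e^{\varepsilon_u}$ at $y=in(u)+1$ and $y=M-in(u)-1$.

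Where you diverge is the bad set. The paper keeps $B$ to exactly those two singletons and invokes no Gaussian tail bound; it then simply asserts that on $y\in[1,M-in(u)-1]$ the decreasing ratio peaks at $y=in(u)+1$ and the increasing one at $y=M-in(u)-1$. You are right that the first of these is the soft spot, but your proposed fix --- throwing all of $tag(u)\le 2\,in(u)$ into $B$ while keeping $\delta_u=\max(p,1-p)^{M-in(u)}$ --- cannot work: in the relevant regime $p(u)\le\tfrac12$ that region already contains the point $\{tag(u)=in(u)\}$ of mass exactly $(1-p)^{M-in(u)}=\delta_u$, so adding even one further output (say $\{tag(u)=in(u)+1\}$, of mass $(M-in(u))\,p\,(1-p)^{M-in(u)-1}$) pushes the total strictly above $\delta_u$, and no Gaussian estimate changes that arithmetic. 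Drop the outlier region and the Gaussian step and you reproduce the paper's derivation verbatim, sharing its informal handling of the lower endpoint.
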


\begin{table}
    \centering
    \begin{tabular}{c||c|ccc||c|ccc}
        $M$ & 100 & 100 & 100 & \textbf{200} & \num{1000000} & \num{1000000} & \num{1000000} & $\mathbf{2}\hspace{0.05cm}\mathbf{000}\hspace{0.05cm}\mathbf{000}$ \\
        $in(u)$ & 10 & 10 & \textbf{20} & 10 & 100 & 100 & \textbf{1000} & 100 \\
        $p(u)$ & $0.5^4$ & $\mathbf{0.5^2}$ & $0.5^4$ & $0.5^4$ & $0.5^8$ & $\mathbf{0.5^4}$ & $0.5^8$ & $0.5^8$ \\
        \midrule
        $\varepsilon_u$ & 7.2 & 5.6 & 7.1 & 8.0 & 19.4 & 16.5 & 19.4 & 20 \\
        $\delta_u$ & 3e-3 & 6e-12 & 6e-3 & 5e-6 & 1e-1700 & 1e-28027 & 1e-1699 & 1e-3400 \\
    \end{tabular}
    \vspace{0.1cm}
    \caption{Exemplary settings to illustrate the trade-off between the false-positive rate $p(u)$ and the privacy parameters of $(\varepsilon,\delta)$-differential privacy for protecting the number of incoming messages.}
    \label{tab:DP_ex}
\end{table}

To illustrate our results, we provide some exemplary settings in Table \ref{tab:DP_ex} and show how the false positive rate $p(u)$ translates into $\varepsilon_u$ and $\delta_u$. It is visible that increasing the detection rate does increase the privacy protection (i.e., lower $\varepsilon_u$ and $\delta_u$), and increasing the overall and incoming messages result in weaker privacy parameter $\varepsilon_u$ and $\delta_u$ respectively. These results suggest, that even the number of incoming messages does not enjoy sufficient (differential) privacy protection in FMD, as the obtained values for $\varepsilon_u$ are generally considered weak.

\section{Evaluation} \label{sec:experiments}

We evaluate the relationship anonymity and temporal detection ambiguity guarantees of FMD through simulations on data from real communication systems.\footnote{
The simulator can be found at \url{https://github.com/seresistvanandras/FMD-analysis}}
We chose two real-world communication networks that could benefit from implementing and deploying FMD on top of them.

\begin{itemize}
    \item \textbf{College Instant Messaging (IM)~\cite{panzarasa2009patterns}.} 
    This dataset contains the instant messaging network of college students from the University of California, Irvine. 
    The graph consists of $\num{1899}$ nodes and $\num{59835}$ edges that cover $193$ days of communication.
    \item \textbf{EU Email~\cite{paranjape2017motifs}.} 
    This dataset is a collection of emails between members of a European research institution. 
    The graph has $\num{986}$ nodes and $\num{332334}$ edges. It contains the communication patterns of $803$ days.
\end{itemize}

Users are roughly distributed equally among major Information Privacy Awareness categories~\cite{soumelidou2021towards}, thus for each node in the datasets, we independently and uniformly at random chose a false-positive rate from the set $\{2^{-l}\}_{l=1}^7$. Note that the most efficient FMD scheme only supports false-positive rates of the form $2^{-l}$. Moreover, for each message and user in the system, we added new "fuzzy" edges to the graph according to the false-positive rates of the messages' recipients. The server is solely capable of observing the message-user graph with the added "fuzzy" edges that serve as cover traffic to enhance the privacy and anonymity of the users. We run our experiments $10$-fold where on average, there are around $16$ and $48$ million fuzzy edges for the two datasets, i.e., a randomly picked edge (the baseline) represents a genuine message with $\ll 1\%$.

\begin{figure}
\centering
\begin{subfigure}{.49\textwidth}
  \centering
  \includegraphics[width=\linewidth]{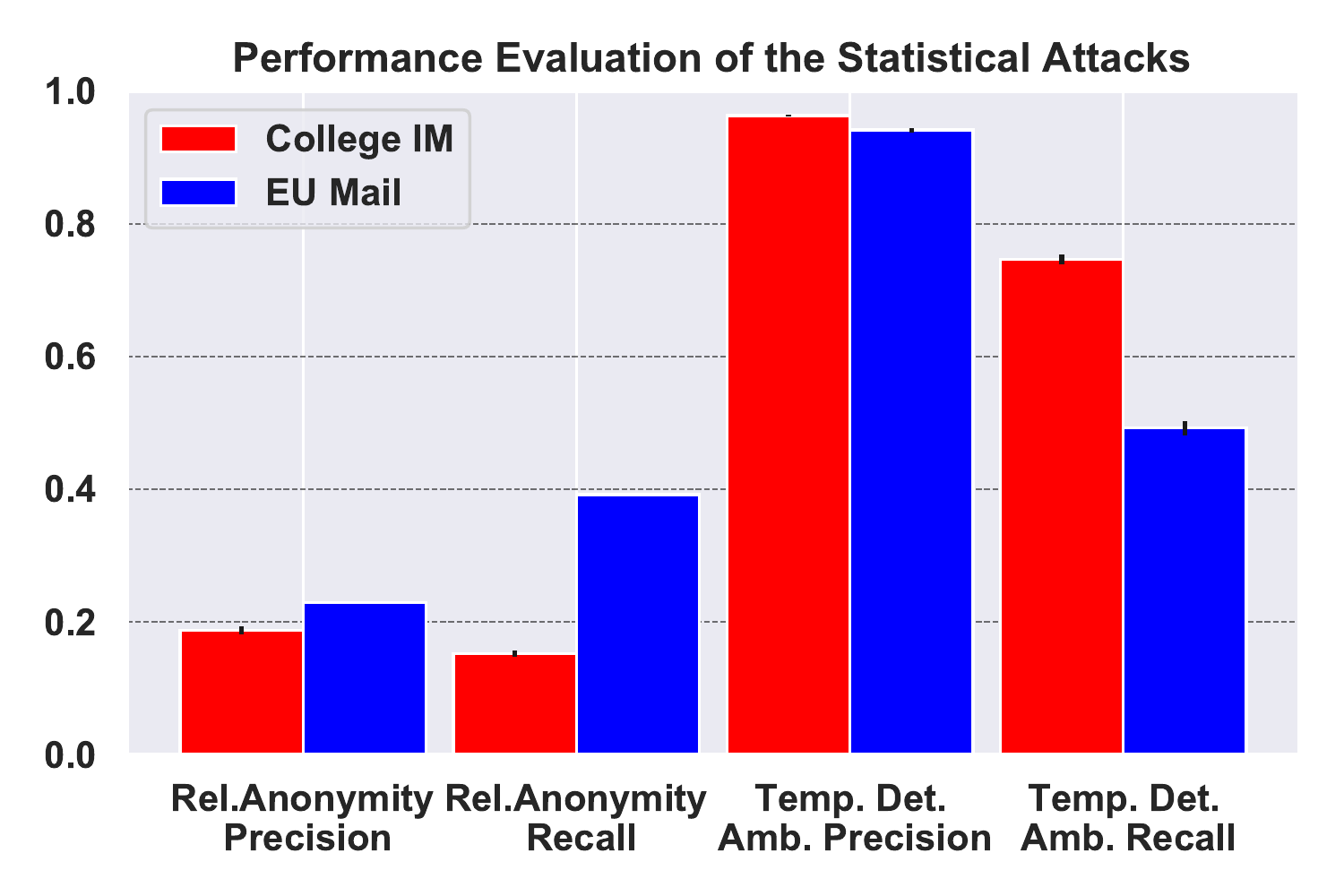}
  \caption{The precision and recall of the statistical tests breaking relationship anonymity and temporal detection ambiguity, cf. Section~\ref{sec:relanon} and~\ref{sec:detambiguity}. }
  \label{fig:evaluation}
\end{subfigure}
\hfill
\begin{subfigure}{.49\textwidth}
  \centering
  \includegraphics[width=\linewidth]{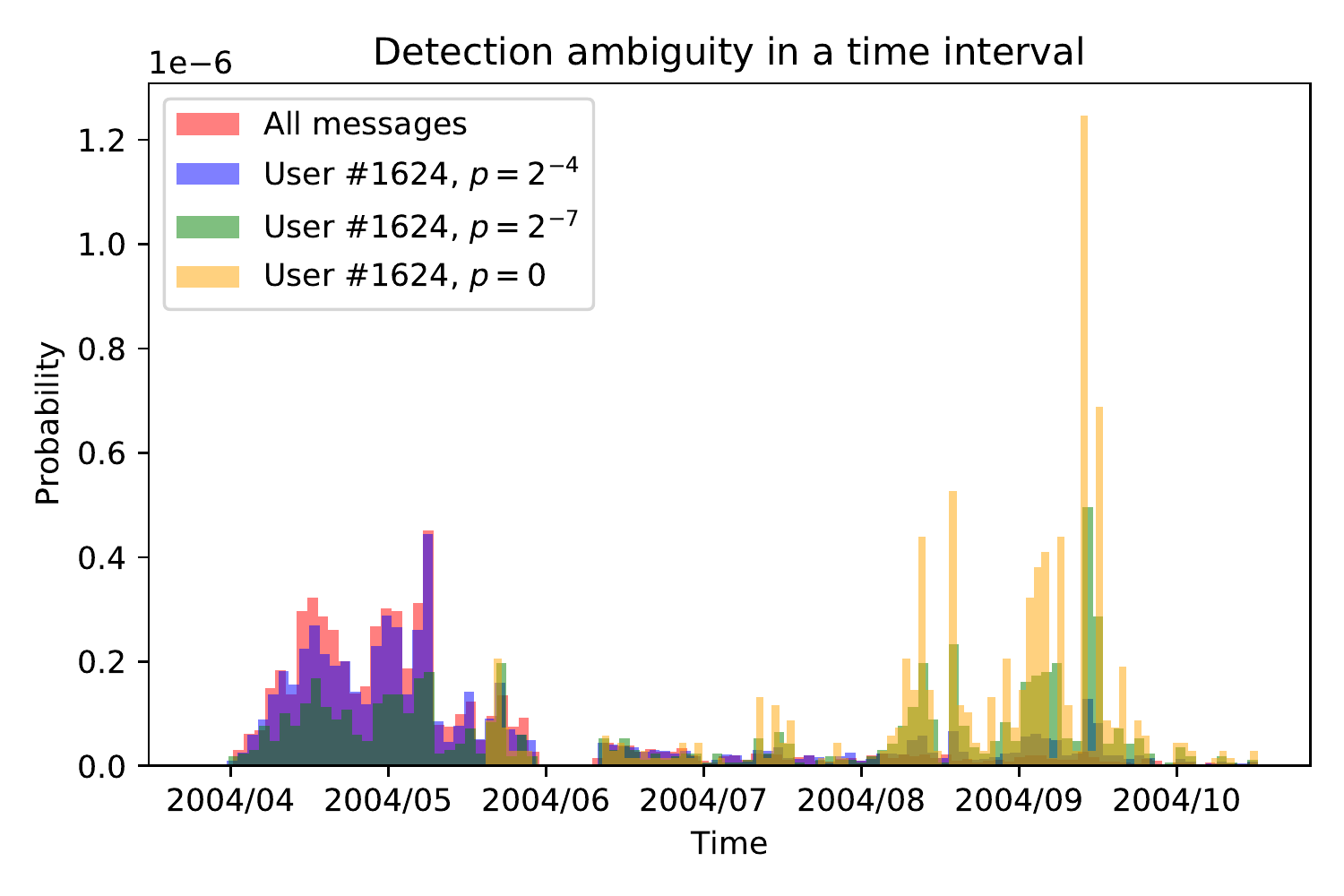}
  \caption{Temporal probability distribution of receiving a fuzzy tag for various false-positive detection rates. The exemplary user is taken from the College IM dataset.}
  \label{fig:detectionambiguityinanepoch}
\end{subfigure}
\caption{Privacy guarantees of FMD in simulations on real communication systems.}
\label{fig:evaluationMain}
\end{figure}

\subsection{Uncovering the relationship graph} \label{sec:exprelationanonymity}

The server's goal is to uncover the original social graph of its users, i.e., to expose the communicating partners. The relationship anonymity of a sender and a receiver can be easily uncovered by the statistical test introduced in Section~\ref{sec:relanon} especially if a user is receiving multiple messages from the same sender while having a low false-positive rate.
We found that statistical tests produce a $0.181$ and $0.229$ precision with $0.145$ and $0.391$ recall on average in predicting the communication links between all the pairs of nodes in the College IM and EU Email datasets, respectively, see Figure~\ref{fig:evaluation}. 
The results corresponding to the EU Email dataset are higher due to the increased density of the original graph. These results are substantial as they show the weak anonymization power of FMD in terms of relationship anonymity.

\begin{figure}
\centering
\begin{subfigure}{.5\textwidth}
  \centering
  \includegraphics[width=\linewidth]{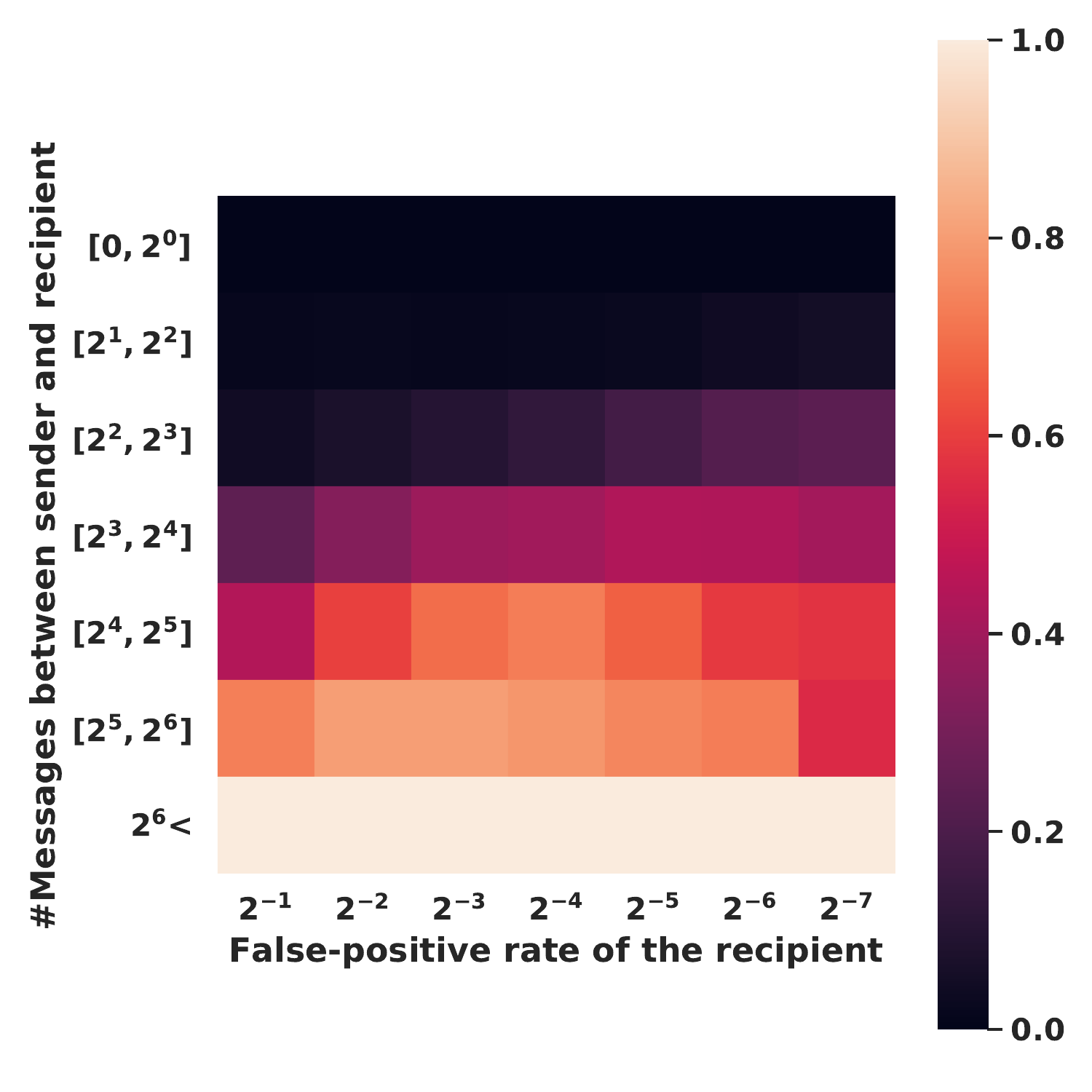}
\end{subfigure}%
\begin{subfigure}{.5\textwidth}
  \centering
  \includegraphics[width=\linewidth]{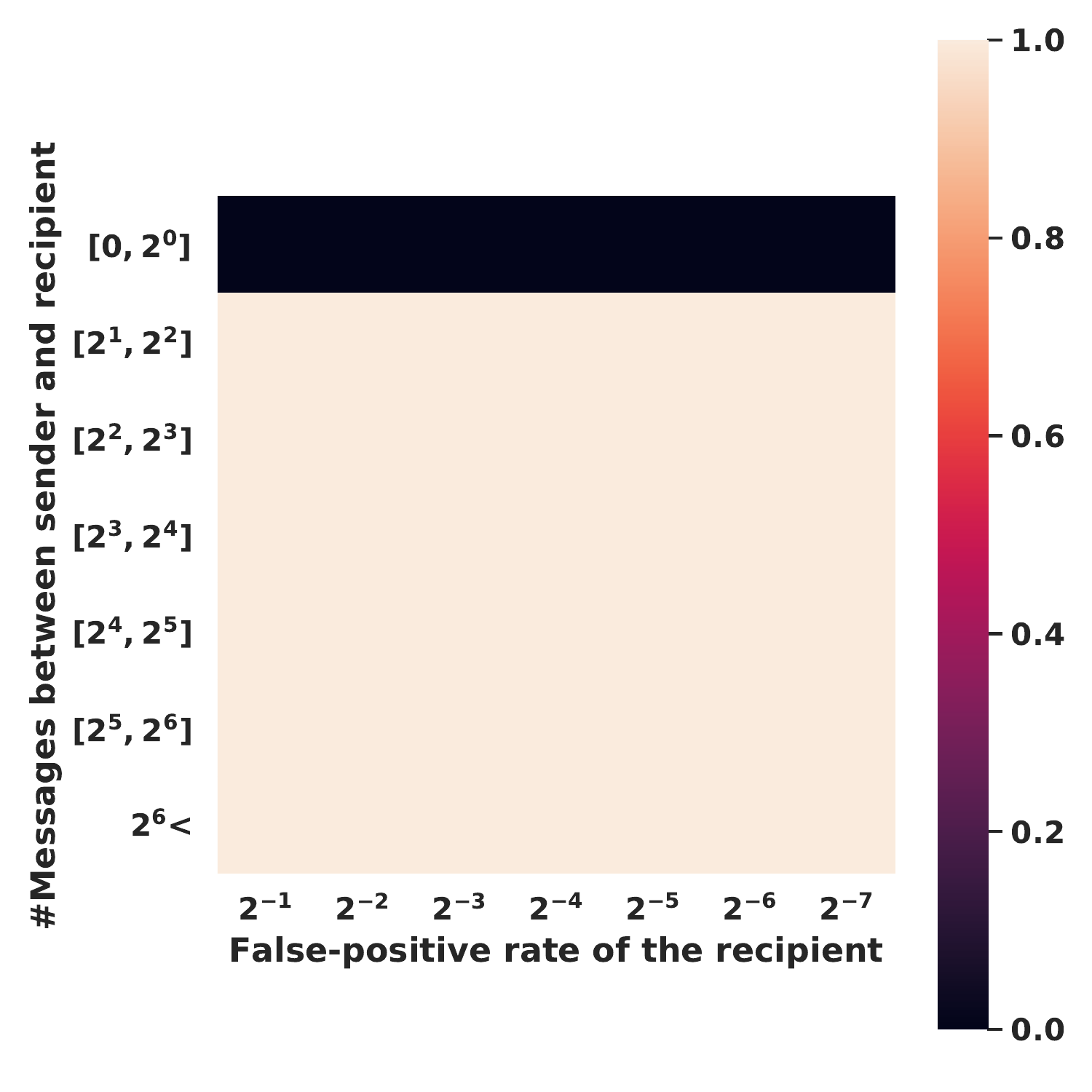}
\end{subfigure}
\caption{Recall (left) and precision (right) of the statistical tests in breaking relationship anonymity (see Section~\ref{sec:relanon}) in simulations on the College IM dataset.}
\label{fig:precisionGranularEvalPlus}
\end{figure}

Specifically, communication relationships where merely a single message has been exchanged remain undetected by the applied statistical tests, cf. Figure~\ref{fig:precisionGranularEvalPlus}. However, note that for every other pairs of users, neither of the analyzed datasets yields false positives by the used statistical tests. These simulation results demonstrate that relationship anonymity is effectively maintained against statistical attacks if each user sends only a single message from the server's point of view. This can be achieved by cryptographic tools or anonymous communication systems, e.g., Tor. On the other hand, recurrent communication relationships reveal the relationship of communicating peers. Thus, relationship anonymity is breached with perfect precision, cf. Figure~\ref{fig:precisionGranularEvalPlus}. Simulation results confirm our intuition as well. Namely, statistical tests produce higher recall for nodes with lower false-positive detection rates, while they are less effective for communicating pairs that exchanged very few messages.

\subsection{Breaking Temporal Detection Ambiguity}\label{sec:expdetectionambiguity}

We empirically quantify whether users can deny that they received an incoming (true positive) message. We consider $\num{25000}$ randomly selected messages with the corresponding fuzzy edges as one epoch. The server tried to assess using statistical tests (see Section~\ref{sec:detambiguity}) that a user has received an incoming message. 
The intuition is that users receive messages heterogeneously concerning time. Hence, surges in incoming traffic might not be adequately covered by fuzzy edges for users with low false-positive rates. Thus, these messages could be tight to the receiver with high probability, see Figure~\ref{fig:detectionambiguityinanepoch} for an illustrative example. 
Indeed, Figure~\ref{fig:evaluation} suggests that, in general, deniability can be broken effectively with high precision and recall. 
On the other hand, Figure~\ref{fig:detectionambiguityinanepoch} also shows that higher false-positive rates could provide enough cover traffic for messages within these conspicuous epochs, which is in line with the findings presented in Figure~\ref{fig:detectionambiguity}.
\section{Conclusion}\label{sec:conclusion}

In this paper, we present a privacy and anonymity analysis 
of the recently introduced Fuzzy Message Detection scheme. 
Our analysis is thorough as it covers over three directions. 
Foremost, an information-theoretical analysis was carried out concerning recipient unlinkability, relationship anonymity, and temporal detection ambiguity. 
It is followed by a differential privacy analysis which leads to a novel privacy definition.
Finally, we gave an exhaustive simulation based on real-world data. 
Our findings facilitate proper parameter selection and the deployment of the FMD scheme into various applications. 
Yet, we also raise concerns about the guarantees what FMD provides and questions whether it is adequate/applicable for many real-world scenarios. 

\paragraph{Limitations and Future Work. }
Although far-reaching, our analysis only scratches the surface of what can be analyzed concerning FMD, and substantial work and important questions remain as future work. 
Thus, a hidden goal of this paper is to fuel discussions about FMD so it can be deployed adequately for diverse scenarios. 
Concretely, we formulated a game only for one privacy property and did not study the Price of Stability/Anarchy. 
Concerning differential privacy, our assumption about the IID nature of the edges in a communication graph is non-realistic. At the same time, the time-dependent aspect of the messages is not incorporated in our analysis via Pan-Privacy.

\paragraph{Acknowledgements.}
We thank our shepherd Fan Zhang and our anonymous reviewers for helpful comments in preparing the final
version of this paper.
We are grateful to Sarah Jamie Lewis for inspiration and publishing the data sets. We thank Henry de Valence and Gabrielle Beck for fruitful discussions.
Project no. 138903 has been implemented with the support provided by the Ministry of Innovation and Technology from the NRDI Fund, financed under the FK\_21 funding scheme.
The research reported in this paper and carried out at the BME has been supported by the NRDI Fund based on the charter of bolster issued by the NRDI Office under the auspices of the Ministry for Innovation and Technology.

\bibliography{sample}
\bibliographystyle{plain}

\appendix
\section{FMD in more details}
\label{sec:app_fmd}

The fuzzy message detection scheme consists of the following five probabilistic polynomial-time algorithms $(\mathsf{Setup},\mathsf{KeyGen},\mathsf{Flag},\mathsf{Extract},\mathsf{Test})$. In the following, let $\mathcal{P}$ denote the set of attainable false positive rates.

\begin{description}
\item[$\mathsf{Setup}(1^{\lambda})$]$\xrightarrow{\$}\mathsf{pp}$. Global parameters $\mathsf{pp}$ of the FMD scheme are generated, i.e., the description of a shared cyclic group.
\item[$\mathsf{KeyGen}_{\mathsf{pp}}(1^{\lambda})$]$\xrightarrow{\$}(pk,sk)$. This algorithm is given the global public parameters and the security parameter and outputs a public and secret key. 
\item[$\mathsf{Flag}(pk)$]$\xrightarrow{\$}C.$ This randomized algorithm given a public key $pk$ outputs a flag ciphertext $C$.
\item[$\mathsf{Extract}(sk,p)$]$\xrightarrow[]{}dsk$. Given a secret key $sk$ and a false positive rate $p$ the algorithm extracts a detection secret key $dsk$ iff. $p\in\mathcal{P}$ or outputs $\bot$ otherwise.
\item[$\mathsf{Test}(dsk,C)$]$\xrightarrow[]{}\{0,1\}$. The test algorithm given a detection secret key $dsk$ and a flag ciphertext $C$ outputs a detection result.
\end{description}

An FMD scheme needs to satisfy three main security and privacy notions: correctness, fuzziness and detection ambiguity. For the formal definitions of these, we refer to~\cite{beckfuzzy}.
The toy example presented in Figure~\ref{fig:fmd_visual} is meant to illustrate the interdependent nature of the privacy guarantees achieved by the FMD scheme.

\begin{figure}
    \centering
    \includegraphics[width=0.7\linewidth,trim={0.7cm 5.2cm 0cm 2cm},clip,scale=0.5]{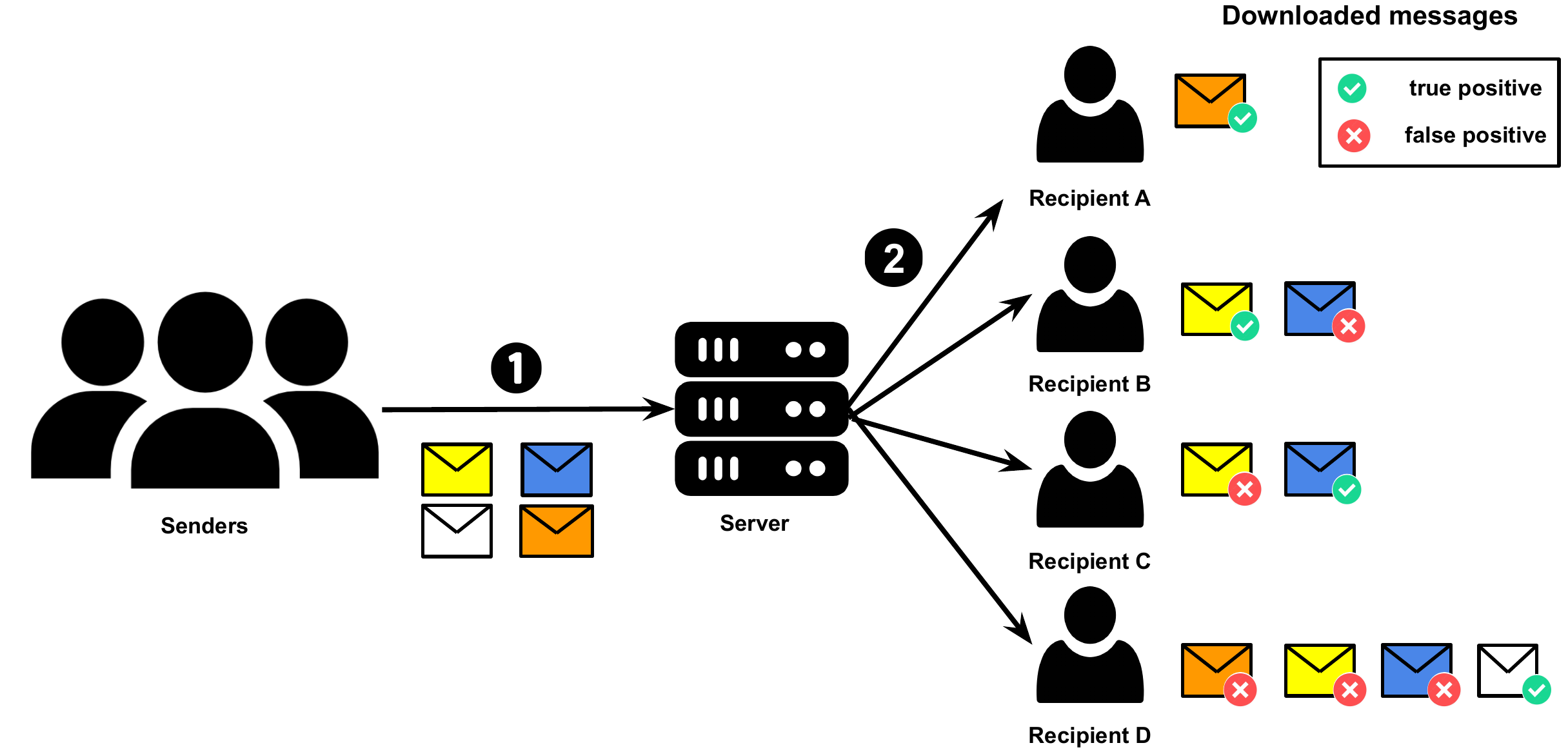}
    \caption{A toy example of the FMD scheme. \encircle{1} Several senders post anonymous messages to the untrusted server. \encircle{2} Whenever recipients come online, they download messages that correspond to them (some false positve, some true positive). Recipient A,B,C and D have a false positive rate $0,\frac{1}{3},\frac{1}{3},1$, respectively. Note that the server can map the messages that belong to A and D. However, the messages of Recipient B and C are $2$-anonymous.}
    \label{fig:fmd_visual}
\end{figure}
\section{Formal definitions of security and privacy guarantees}
\label{sec:APPPRI}



\begin{figure}
    \centering
 \begin{tcolorbox}[
    standard jigsaw,
    opacityback=0]
{\bf The recipient unlinkability $\mathcal{G}^{RU}_{\mathcal{A},\Pi}(\lambda)$ game}

{
\begin{enumerate}
    \item Adversary $\mathcal{A}$ selects target recipients $u_0, u_1$ and a target sender $u_2$.
    \item Challenger $\mathcal{C}$ instructs sender $u_2$ to send a message to $u_c$ for $c\xleftarrow{\$}\{0,1\}$.
    \item $\mathcal{C}$ uniformly at random generates a challenge bit $b\xleftarrow{\$}\{0,1\}$. If $b=0$, $\mathcal{C}$ instructs $u_2$ to send a message to $u_c$. Otherwise, instructs $u_2$ to send a message to $u_{1-c}$.   
    \item $\mathcal{A}$ observes network traffic and flag ciphertexts and outputs $b'$.
    \item Output $1$, iff. $b=b'$, otherwise $0$.
\end{enumerate}
}
\end{tcolorbox}
    \caption{The security game for the anonymity notion of recipient unlinkability.}
    \label{fig:rugame}
\end{figure}



\begin{definition}[Temporal Detection Ambiguity]
An anonymous communication protocol $\Pi$ satisfies temporal detection ambiguity if for all probabilistic polynomial-time adversaries $\mathcal{A}$ there is a negligible function $\mathsf{negl}(\cdot)$ such that
\begin{equation}
    \Pr[\mathcal{G}^{TDA}_{\mathcal{A},\Pi}(\lambda)=1]\leq\frac{1}{2}+\mathsf{negl}(\lambda),
\end{equation}
where the temporal detection ambiguity game $\mathcal{G}^{TDA}_{\mathcal{A},\Pi}(\cdot)$ is defined below.
\end{definition}

\begin{figure}
    \centering
\begin{tcolorbox}[
    standard jigsaw,
    opacityback=0]
{\bf The temporal detection ambiguity $\mathcal{G}^{TDA}_{\mathcal{A},\Pi}(\lambda)$ game}

{
\begin{enumerate}
    \item Adversary $\mathcal{A}$ selects a target recipient $u_0$.
    \item Challenger $\mathcal{C}$ uniformly at random generates a challenge bit $b\xleftarrow{\$}\{0,1\}$. If $b=0$, $\mathcal{C}$ picks $k\xleftarrow{\$}[1,2,\dots,U]$ and instructs sender $u_k$ to send a message to $u_0$. Otherwise, the challenger does nothing. 
    \item The anonymous communication protocol $\Pi$ remains functional for a certain period of time, i.e., users keep sending messages using $\Pi$.
    \item $\mathcal{A}$ observes network traffic and flag ciphertexts and outputs $b'$.
    \item Output $1$, iff. $b=b'$, otherwise $0$.
\end{enumerate}
}
\end{tcolorbox}
    \caption{The security game for the privacy notion of temporal detection ambiguity}
    \label{fig:tdagame}
\end{figure}
\section{Differential Privacy Relaxations \& Proofs}
\label{sec:DPAPP}

Our novel DP notion called PEEDP (short for Personalized Existing Edge DP) is an instance of $d$-privacy~\cite{chatzikokolakis2013broadening}, which generalizes the neighbourhood of datasets (on which the DP inequality should hold) to an arbitrary metric $d$ defined over the input space. Yet, instead of a top-down approach where we are presenting a complex metric to fit our FMD use-case, we follow a bottom-up approach and show the various building blocks of our definition. PEEDP is a straight forward combination of unbounded DP~\cite{kifer2011no}, edge-DP~\cite{hay2009accurate}), assymetric DP~\cite{takagi2021asymmetric}, and personalized DP~\cite{jorgensen2015conservative}. Although Definition~\ref{def:peedp} is appropriate for FMD, it does not capture the FMD scenarios fully as neither time-dependent nature of the messages nor the dependencies and correlations between them are taken into account.

The first issue can be tackled by integrating other DP notions into PEEDP which provide guarantees under continuous observation (i.e., stream-data), such as pan-privacy~\cite{dwork2010pan}. Within this streaming context several definitions can be considered: user-level~\cite{dwork2010differential} (to protect the presence of users), event-level~\cite{dwork2010new} (to protect the presence of messages), and $w$-event level~\cite{kellaris2014differentially} (to protect the presence of messages within time windows). 

The second issue is also not considered in Theorem~\ref{th:osdp} as we assumed the messages are IID, while in a real-world applications this is not necessarily the case. Several DP notions consider distributions, without cherry-picking any we refer the readers to two corresponding surveys \cite{desfontaines2020sok,zhang2020correlated}. We leave it as a future work to tweak our definition further to fit into these contexts. 
\begin{proof}[of Theorem~\ref{th:osdp}]
	Due to the IID nature of the messages it is enough to show that Equation \ref{eq:DP_new} holds for an arbitrary communication graph $D$ with arbitrary message $m$ of an arbitrary user $u$. The two possible world the adversary should not be able to differentiate between $D=D^\prime/\{m\}$, i.e., whether the particular message exists or not. 
	Due to the asymmetric nature of Definition~\ref{def:peedp} (i.e., it only protects the existence) Equation \ref{eq:0} does not need to be satisfied. On the other hand, if the message exists than Equation \ref{eq:1} and \ref{eq:2} must be satisfied where $S_1=$\{message $m$ is downloaded by user $u$\} and $S_2=$\{message $m$ is not downloaded by user $u$\}.
	
	\begin{gather}
		\label{eq:0}\Pr(A(D^\prime)\in S)\le e^{\varepsilon_u} \cdot \Pr(A(D)\in S)\\
		\label{eq:1}\Pr(A(D)\in S_1)\le e^{\varepsilon_u} \cdot \Pr(A(D^\prime)\in S_1)\\
		\label{eq:2}\Pr(A(D)\in S_2)\le e^{\varepsilon_u} \cdot \Pr(A(D^\prime)\in S_2)
	\end{gather}
	
	If we reformulate the last two equations with the corresponding probabilities we get $1\le e^{\varepsilon_u}\cdot p(u)$ and $0\le e^{\varepsilon_u}\cdot(1-p(u))$ respectively. While the second holds trivially the first corresponds to the formula in Theorem \ref{th:osdp}.	
	\hfill\qed
\end{proof}

\begin{proof}[of Theorem~\ref{th:DP}]
The users' number of incoming messages are independent from each other hence we can focus on a single user $u$. The proof follows the idea from~\cite{korolova2009releasing}\footnote{We present the proof for singleton sets, but it can be extended by using the following formula: $\frac{A+C}{B+D}<\max(\frac{A}{B},\frac{C}{D})$.}: we satisfy Equation \ref{eq:DP} (with $+\delta$ at the end) when $A(D)=tag(u)\sim D+\mathsf{Binom}(M-in(u),p(u))$ for $D=in(u)$ and $D^\prime=in(u)\pm1$, i.e., we show that the following Equation holds.

\begin{equation*}
    \begin{split}
        \Pr(A(D)=tag(u)\in S|D=in(u),M,p(u))\le\\
        e^\varepsilon\cdot\Pr(A(D^\prime)=tag^\prime(u)\in S|D^\prime=in(u)\pm1,M^\prime=M\pm1,p(u))+\delta\\
        \Rightarrow\hspace{1cm}\Pr(in(u)+\mathsf{Binom}(M-in(u),p(u))\in S)\le\\
        e^\varepsilon\cdot\Pr(in(u)\pm1+\mathsf{Binom}(M\pm1-(in(u)\pm1),p(u))\in S)+\delta\\
    \end{split}
\end{equation*}

First, we focus on $\delta$ and provide a lower bound originating from the probability on the left when $\Pr(\cdot)\le e^\varepsilon\cdot0+\delta$. 
This corresponds to two cases as seen in the Equation below: when $D^\prime=in(u)+1$ with $S=\{in(u)\}$ and when $D^\prime=in(u)-1$ with $S=\{M\}$. The corresponding lower bounds (i.e., probabilities) correspond to the event when user $u$ does not download any fuzzy messages and when user $u$ does downloads all messages respectively. Hence, the maximum of these are indeed a lower bound for $\delta$. 

\begin{gather*}
    \Pr(A(in(u))=in(u))\le e^\varepsilon\cdot\Pr(A(in(u)+1)=in(u))+\delta\Rightarrow (1-p(u))^{M-in(u)}\le\delta\\
    \Pr(A(in(u))=M)\le e^\varepsilon\cdot\Pr(A(in(u)-1)=M)+\delta\hspace{0.2cm}\Rightarrow\hspace{0.2cm}p(u)^{M-in(u)}\le\delta
\end{gather*}

Now we turn towards $\varepsilon$ and show that $(\varepsilon,0)$-DP holds for all subset besides the two above, i.e., when $S=\{in(u)+y\}$ with $y=[1,\dots,M-in(u)-1]$. First, we reformulate Equation \ref{eq:DP} as seen below. 

\begin{equation*}
        \frac{\Pr(in(u)+\mathsf{Binom}(M-in(u),p(u))\in S)}{\Pr(in(u)\pm1+\mathsf{Binom}(M-in(u),p(u))\in S)}\le
        e^\varepsilon
\end{equation*}

Then, by replacing the binomial distributions with the corresponding probability formulas we get the following two equations for $D^\prime=in(u)+1$ and $D^\prime=in(u)-1$ respectively. 

\begin{gather*}
    \frac{\binom{M-in(u)}{y}\cdot p(u)^y\cdot(1-p(u))^{M-in(u)-y}}{\binom{M-in(u)}{y-1}\cdot p(u)^{y-1}\cdot(1-p(u))^{M-in(u)-y+1}}=\frac{M-in(u)-y+1}{y}\cdot\frac{p(u)}{1-p(u)}\le e^\varepsilon\\
    \frac{\binom{M-in(u)}{y}\cdot p(u)^y\cdot(1-p(u))^{M-in(u)-y}}{\binom{M-in(u)}{y+1}\cdot p(u)^{y+1}\cdot(1-p(u))^{M-in(u)-y-1}}=\frac{y+1}{M-in(u)-y}\cdot\frac{1-p(u)}{p(u)}\le e^\varepsilon\\
\end{gather*}

Consequently, the maximum of these is the lower bound for $e^\varepsilon$. The first formula's derivative is negative, so the function is monotone decreasing, meaning that its maximum is at $y=in(u)+1$. On the other hand, the second formula's derivative is positive so the function is monotone increasing, hence the maximum is reached at $y=M-in(u)-1$. By replacing $y$ with these values respectively one can verify that the corresponding maximum values are indeed what is shown in Theorem \ref{th:DP}. 
\hfill\qed
\end{proof}
\section{Game-Theoretical Analysis}
\label{sec:gametheory}

Here --- besides a short introduction of the utilized game theoretic concepts --- we present a rudimentary game-theoretic study of the FMD protocol focusing on relationship anonymity introduced in Section~\ref{sec:privacyguarantees}. 
First, we formalize a game and highlight some corresponding problems such as the interdependence of the user's privacy. 
Then, we unify the user's actions and show the designed game's only Nash Equilibrium, which is to set the false-positive detection rates to zero, rendering FMD idle amongst selfish users. 
Following this, we show that a higher utility could been reached with altruistic users and/or by centrally adjusting the false-positive detection rates. 
Finally, we show that our game (even with non-unified actions) is a potential game, which have several nice properties, such as efficient Nash Equilibrium computation.

\begin{itemize}
    \item \textbf{Tragedy of Commons}~\cite{hardin1968tragedy}: users act according to their own self-interest and, contrary to the common good of all users, cause depletion of the resource through their uncoordinated action. 
    \item \textbf{Nash Equilibrium}~\cite{nash1950equilibrium}: every player makes the best/optimal decision for itself as long as the others’ choices remain unchanged. 
    \item \textbf{Altruism}~\cite{simon1993altruism}: users act to promote the others' welfare, even at a risk or cost to ourselves.
    \item \textbf{Social Optimum}~\cite{harsanyi1988general}: the user's strategies which maximizes social welfare (i.e., the overall accumulated utilities).
    \item \textbf{Price of Stability/Anarchy}~\cite{nshelevich2008price,koutsoupias1999worst}: the ratio between utility values corresponding to the best/worst NE and the SO. It measures how the efficiency of a system degrades due to selfish behavior of its agents.
    \item \textbf{Best Response Mechanism}~\cite{nisan2011best}: from a random initial strategy the players iteratively improve their strategies
\end{itemize}

Almost every multi-party interaction can be modeled
as a game. In our case, these decision makers are the
users using the FMD service. We assume the users bear some costs $C_u$ for downloading any message from the server. For simplicity we define this uniformly: if $f$ is the cost of retrieving any message for any user than $C_u=f\cdot tag(u)$. Moreover, we replace the random variable $tag(u)\sim in(u)+\mathsf{Binom}(M-in(u),p(u))$ with its expected value, i.e., $C_u=f\cdot(in(u)+p(u)\cdot(M-in(u)))$. 

Besides, the user's payoff should depend on whether any of the privacy properties detailed in Section \ref{sec:privacyguarantees} are not satisfied. For instance, we assume the users suffer from a privacy breach if relationship anonymity is not ensured, i.e., they uniformly lose $L$ when the recipient $u$ can be linked to any sender via any message between them.
In the rest of the section we slightly abuse the notation $u$ as in contrast to the rest of the paper we refer to the users as $u\in\{1,\dots,U\}$ instead of $\{u_0,u_1,\dots\}$.
The probability of a linkage via a particular message for user $u$ is $\alpha_u=\prod_{v\in\{1,\dots,U\}/u}(1-p(v))$. The probability of a linkage from any incoming message of $u$ is $1-(1-\alpha_u)^{in(u)}$.\footnote{It is only an optimistic baseline as it merely captures the trivial event when no-one downloads the a message from any sender $v$ besides the intended recipient $u$.} Based on these we define the FMD-RA Game.

\begin{definition}
The FMD-RA Game is a tuple $\langle\mathcal{N},\Sigma,\mathcal{U}\rangle$, where the set of players is $\mathcal{N}=\{1,\dots,U\}$, their actions are $\Sigma=\{p(1),\dots,p(U)\}$ where $p(u)\in[0,1]$ while their utility functions are $\mathcal{U}=\{\varphi_u(p(1),\dots,p(U))\}_{u=1}^U$ such that for $1\le u\le U$:
\begin{equation}
    \label{eq:util}
    \begin{split}
    \varphi_u=-L\cdot\left(1-\left(1-\alpha_u\right)^{in(u)}\right)-f\cdot(in(u)+p(u)\cdot(M-in(u))).
    \end{split}
\end{equation}
\end{definition}

It is visible in the utility function that the bandwidth-related cost (second term) depends only on user $u$'s action while the privacy-related costs (first term) depend only on the other user's actions. This reflects well that relationship anonymity is an interdependent privacy property~\cite{biczok2013interdependent} within FMD: by downloading fuzzy tags, the users provide privacy to others rather than to themselves. As a consequence of this tragedy-of-commons~\cite{hardin1968tragedy} situation, a trivial no-protection Nash Equilibrium (NE) emerges. Moreover, Theorem~\ref{th:trivialNE} also states this NE is unique, i.e., no other NE exists.

\begin{theorem}
\label{th:trivialNE}
    Applying no privacy protection in the FMD-RA Game is the only NE: $(p^*(1),\dots,p^*(U))=(0,\dots,0)$.
\end{theorem}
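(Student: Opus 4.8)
The plan is to show two things: first, that $(0,\dots,0)$ is a Nash Equilibrium, and second, that it is the only one. For the first part, I would inspect a single user $u$'s utility $\varphi_u$ while fixing everyone else's strategies. Looking at Equation~\ref{eq:util}, the privacy term $-L\cdot(1-(1-\alpha_u)^{in(u)})$ does not depend on $p(u)$ at all --- it depends only on $\alpha_u$, which is a product over $v\neq u$. The only term involving $p(u)$ is the bandwidth cost $-f\cdot(in(u)+p(u)\cdot(M-in(u)))$, which is strictly decreasing in $p(u)$ whenever $M>in(u)$ (i.e. whenever the user is not the sole recipient of every message, which we may assume). Hence each user's unique best response to any profile of others' actions is $p(u)=0$, so in particular $(0,\dots,0)$ is a NE.

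For uniqueness, I would argue from the same observation: since $p(u)=0$ is the \emph{strictly} dominant action for every player $u$ (the best response is $0$ regardless of what the others do), any NE must have $p^*(u)=0$ for all $u$. A profile in which some $p^*(u)>0$ cannot be an equilibrium because that user would strictly increase $\varphi_u$ by unilaterally deviating to $0$. This gives $(p^*(1),\dots,p^*(U))=(0,\dots,0)$ as the only NE.

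The main subtlety --- not really an obstacle, but the one place where care is needed --- is the boundary case $M=in(u)$, where the bandwidth cost becomes constant in $p(u)$ and the best response is no longer unique (any $p(u)$ is a best response). I would handle this by noting that $M$ is the total number of messages in the system and $in(u)$ the incoming messages of a single user, so $M>in(u)$ holds generically (strictly, as soon as there are at least two recipients or any user sends to more than one peer); under this mild assumption the dominance is strict and uniqueness follows. One should also confirm that $\alpha_u\in(0,1)$ and $in(u)\geq 1$ so that the privacy term is genuinely independent of $p(u)$ and nothing degenerate happens, but these do not affect the argument since that term simply has no $p(u)$ dependence at all.

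I expect the author's proof to be essentially this: identify that $\varphi_u$ is strictly monotone decreasing in $p(u)$ (with the privacy term inert), conclude $0$ is strictly dominant, and read off both existence and uniqueness of the NE in one stroke.
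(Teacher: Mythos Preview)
Your proposal is correct and matches the paper's proof almost exactly: the paper also observes that the privacy term is independent of $p(u)$, computes $\partial\varphi_u/\partial p(u)=-f\cdot(M-in(u))<0$, concludes that $(0,\dots,0)$ is a NE, and then argues uniqueness by noting any player at a putative NE with some $\hat p(u)>0$ could strictly improve by lowering $p(u)$. Your framing via strict dominance is slightly cleaner, and your remark on the degenerate case $M=in(u)$ is a point the paper glosses over.
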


\begin{proof}
    First we prove that no-protection is a NE. If all user $u$ set $p(u)=0$ than a single user by deviates from this strategy would increased its cost. Hence no rational user would deviate from this point. In details, in Equation~\ref{eq:util} the privacy related costs is constant $-L$ independently from user $u$'s false-positive rate while the download related cost would trivially increase as the derivative of this function (shown in Equation~\ref{eq:deriv}) is negative.
    \begin{equation}
        \label{eq:deriv}
        \frac{\partial\varphi_u}{\partial p(u)}=-f\cdot(M-in(u))<0
    \end{equation}
    Consequently, $p^*=(p^*(1),\dots,p^*(U))=(0,\dots,0)$ is indeed a NE. Now we give an indirect reasoning why there cannot be any other NEs. Lets assume $\hat{p}=(\hat{p}(1),\dots,\hat{p}(U))$ is a NE. At this state any player could decrease its cost by reducing its false positive-rate which only lower the download related cost. Hence, $\hat{p}$ is not an equilibrium. 
    \hfill\qed
\end{proof}

This negative result highlights that in our simplistic model, no rational (selfish) user would use FMD; it is only viable when altruism~\cite{simon1993altruism} is present. On the other hand, (if some condition holds) in the Social Optimum (SO)~\cite{harsanyi1988general}, the users do utilize privacy protection. This means a higher total payoff could be achieved (i.e., greater social welfare) if the users cooperate or when the false-positive rates are controlled by a central planner. 
Indeed, according to Theorem~\ref{th:so} the SO$\not=$NE if, for all users, the cost of the fuzzy message downloads is smaller than the cost of the privacy loss. The exact improvement of the SO over the NE could be captured by the Price of Stability/Anarchy~\cite{nshelevich2008price,koutsoupias1999worst}, but we leave this as future work. 

\begin{theorem}
    \label{th:so}
    The SO of the FMD-RA Game is not the trivial NE and corresponds to higher overall utilities if $f\cdot(M-\max_u(in(u)))<L$.
\end{theorem}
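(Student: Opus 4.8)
The plan is to compare the social welfare $\sum_{u=1}^U \varphi_u$ at the trivial Nash Equilibrium $p^*=(0,\dots,0)$ against its value at a well-chosen alternative strategy profile, and to exhibit a profile that beats $p^*$ precisely under the hypothesis $f\cdot(M-\max_u(\mathit{in}(u)))<L$. At $p^*$ every $\alpha_u=\prod_{v\neq u}(1-p(v))=1$, so the privacy term $1-(1-\alpha_u)^{\mathit{in}(u)}$ equals $1$ for each $u$ with $\mathit{in}(u)\geq 1$; hence the total welfare at $p^*$ is $-L\cdot|\{u:\mathit{in}(u)\geq 1\}| - f\sum_u \mathit{in}(u)$, the worst possible privacy cost plus the minimal bandwidth cost. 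The natural candidate to compare against is the fully-covering profile $p^{\mathrm{SO}}=(1,\dots,1)$: there $\alpha_u=0$, so the privacy term vanishes entirely, and the welfare becomes $-f\sum_u(\mathit{in}(u)+(M-\mathit{in}(u))) = -f\cdot U\cdot M$ (each user downloads all $M$ messages).

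First I would write out the welfare difference $W(p^{\mathrm{SO}})-W(p^*) = L\cdot|\{u:\mathit{in}(u)\geq 1\}| - f\sum_u (M-\mathit{in}(u))$. Since $\sum_u(M-\mathit{in}(u))\le U\cdot(M-\min_u \mathit{in}(u))$ and $|\{u:\mathit{in}(u)\ge1\}|$ may be as small as... — actually the clean sufficient condition comes from bounding termwise: it suffices that for each user the per-user privacy saving $L$ dominates the per-user bandwidth increase $f\cdot(M-\mathit{in}(u))$, i.e. $f\cdot(M-\mathit{in}(u))<L$ for all $u$, which is exactly $f\cdot(M-\max_u(\mathit{in}(u)))<L$. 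Under this hypothesis each summand of $W(p^{\mathrm{SO}})-W(p^*)$ (pairing the $u$-th privacy gain with the $u$-th bandwidth loss) is positive, so $W(p^{\mathrm{SO}})>W(p^*)$; thus the SO, whatever it is, attains welfare at least $W(p^{\mathrm{SO}})$ and therefore strictly exceeds the NE value, and in particular the SO is not the trivial profile. I would also remark that the converse direction (if the inequality fails then $p^*$ is itself socially optimal, or nearly so) follows by the same termwise comparison together with monotonicity of each $\varphi_u$'s bandwidth term, justifying the "if" phrasing and the claim SO$\neq$NE.

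The main obstacle is a subtlety rather than a hard calculation: at an arbitrary interior profile the privacy term $1-(1-\alpha_u)^{\mathit{in}(u)}$ is a genuinely nonlinear, coupled function of all coordinates, so one cannot simply differentiate the social welfare and locate the SO in closed form. The trick that makes the proof go through is to avoid characterizing the SO exactly — we only need a \emph{lower bound} on the optimal welfare, and evaluating $W$ at the single explicit point $p^{\mathrm{SO}}=(1,\dots,1)$ suffices for that. A secondary point to handle carefully is the edge case of users with $\mathit{in}(u)=0$: for them the privacy term is $0$ at every profile, so they contribute nothing to the privacy saving and their bandwidth term is minimized at $p(u)=0$; these users should simply be excluded from the comparison (or one notes the inequality $f\cdot(M-\mathit{in}(u))<L$ is used only for users with $\mathit{in}(u)\geq1$, and users with $\mathit{in}(u)=0$ keep $p(u)=0$ in the optimal profile). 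Assembling these observations gives the theorem.
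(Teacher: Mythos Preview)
Your approach coincides with the paper's at the core: both compare the trivial profile $(0,\dots,0)$ against the all-one profile $(1,\dots,1)$ and observe that each user's utility is larger at the latter precisely when $f\cdot(M-in(u))<L$. The paper restricts attention to the one-parameter diagonal $p(u)\equiv p$, computes $\varphi_u(0)=-L-f\cdot in(u)$ and $\varphi_u(1)=-f\cdot M$, and then appeals to the endpoint derivatives $\partial\varphi_u/\partial p=-f\cdot(M-in(u))$ together with a picture to argue that the maximum along the diagonal lies strictly in $(0,1)$; your direct social-welfare comparison $W(1,\dots,1)>W(0,\dots,0)$ is a more streamlined version of the same endpoint idea and already suffices for the statement as written. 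Your handling of the $in(u)=0$ edge case is also more careful than the paper's, which silently assumes $in(u)\ge1$ when writing $\varphi_u(0)=-L-f\cdot in(u)$.

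One slip worth flagging, which you share verbatim with the paper: the equivalence you assert --- ``$f\cdot(M-in(u))<L$ for all $u$'' is the same as ``$f\cdot(M-\max_u in(u))<L$'' --- is inverted. Since $M-in(u)$ is largest when $in(u)$ is smallest, the uniform per-user bound is equivalent to $f\cdot(M-\min_u in(u))<L$; the hypothesis as stated in the theorem only guarantees the inequality for the single user with the largest $in(u)$. Your argument thus mirrors the paper's faithfully, including this conflation.
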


\begin{proof}
    We show that the condition in the theorem is sufficient to ensure that SO$\not=$NE by showing that greater utility could be achieved with $0<p^\prime(u)$ than with $p(u)=0$. To do this we simplify out scenario and set $p(u)=p$ for all users. The corresponding utility function is presented in Equation \ref{eq:util2} while in Equation~\ref{eq:01} we show the exact utilities when $p$ is either 0 or 1. 
    
    \begin{gather}
        \label{eq:util2}
        \varphi_u(p)=
        - L \cdot (1-(1-(1-p)^{U-1})^{in(u)})
        - f \cdot (in(u) + p \cdot (M - in(u)))\\
        \label{eq:01}
        \varphi_u(0)=-L-f\cdot in(u)
        \hspace{1cm}
        \varphi_u(1)=-f\cdot M
    \end{gather}

    One can check with some basic level of mathematical analysis that the derivative of Equation~\ref{eq:util2} is negative at both edge of $[0,1]$ as $\frac{\partial\varphi_u(p)}{\partial p}(0)=\frac{\partial\varphi_u(p)}{\partial p}(1)=-f\cdot(M-in(u))$. This implies that the utility is decreasing at these points. Moreover, depending on the relation between the utilities in Equation~\ref{eq:01} (when $p=0$ and $p=1$), two scenario is possible as we illustrate in Figure~\ref{fig:ill}. From this figure it is clear that when $\varphi_u(0)<\varphi_u(1)$ (or $f\cdot(M-in(u))<L$) for all users that the maximum of their utilities cannot be at $p=0$. 
    \hfill\qed
\end{proof}



\begin{figure}[t]
    \centering
    \includegraphics[width=8cm]{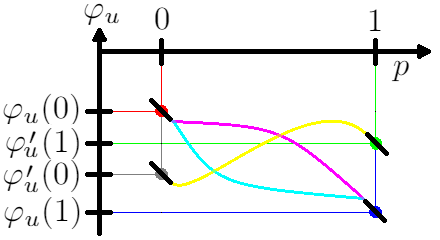}
    \caption{Illustration of the utility functions: the yellow curve's maximum must be between zero and one since the gray dot is below the green where the derivative is negative. }
    \label{fig:ill}
\end{figure}

\paragraph{Potential Game. }
We also show that FMD-RA is a potential game~\cite{monderer1996potential}. 
This is especially important, as it guaranteed that the Best Response Dynamics terminates in a NE. 

\begin{definition}[Potential Game]
A Game $\langle\mathcal{N},\mathcal{A},\mathcal{U}\rangle$ (with players $\{1,\dots,U\}$, actions $\{a_1,\dots,a_U\}$, and utilities $\{\varphi_1,\dots,\varphi_U\}$) is a Potential Game if there exist a potential function $\Psi$ such that Equation \ref{eq:pot} holds for all players $u$ independently of the other player's actions.\footnote{$a_{-u}$ is a common notation to represent all other players action except player $u$. Note that $p(-u)$ stands for the same in relation with FMD. }
\begin{equation}
\label{eq:pot}
    \varphi_u(a_u,a_{-u})-\varphi_u(a_u^\prime,a_{-u})=\Psi(a_u,a_{-u})-\Psi(a_u^\prime,a_{-u})
\end{equation}
\end{definition}

\begin{theorem}
    FMD-RA is a Potential Game with potential function shown in Equation~\ref{eq:pot_FMD}.
    \begin{equation}
    \label{eq:pot_FMD}
    \Psi(p(1),\dots,p(U))=-f\cdot\sum_{u=1}^Up(u)\cdot(M-in(u))
\end{equation}
\end{theorem}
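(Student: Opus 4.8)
The plan is to verify the potential-game condition in Equation~\eqref{eq:pot} directly by substituting the proposed potential function $\Psi$ from Equation~\eqref{eq:pot_FMD} and the utility function $\varphi_u$ from Equation~\eqref{eq:util}. The key observation is that $\varphi_u$ splits into two parts: a privacy-loss term $-L\cdot(1-(1-\alpha_u)^{in(u)})$ that depends on $\alpha_u=\prod_{v\neq u}(1-p(v))$, hence only on the actions $p(-u)$ of the \emph{other} players, and a bandwidth term $-f\cdot(in(u)+p(u)\cdot(M-in(u)))$ that depends on $p(u)$ only through the single summand $-f\cdot p(u)\cdot(M-in(u))$.

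First I would fix a player $u$ and an action profile $p(-u)$ of the others, and compute the left-hand side of Equation~\eqref{eq:pot} for two actions $p(u)$ and $p'(u)$ of player $u$. The privacy term is identical in $\varphi_u(p(u),p(-u))$ and $\varphi_u(p'(u),p(-u))$ because it does not involve $p(u)$, so it cancels in the difference. What remains is $-f\cdot(M-in(u))\cdot(p(u)-p'(u))$, since the $-f\cdot in(u)$ part also cancels. Next I would compute the right-hand side: in $\Psi(p(1),\dots,p(U))=-f\cdot\sum_{w=1}^U p(w)\cdot(M-in(w))$, every summand with $w\neq u$ is unchanged when we replace $p(u)$ by $p'(u)$, so $\Psi(p(u),p(-u))-\Psi(p'(u),p(-u))$ also equals $-f\cdot(M-in(u))\cdot(p(u)-p'(u))$. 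The two sides agree, which establishes the claim.

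There is essentially no hard step here — the result follows from the separable structure of $\varphi_u$, the point being that the only $p(u)$-dependent part of player $u$'s utility is exactly the $u$-th summand of $\Psi$, and the privacy externality term, while complicated, is entirely independent of $p(u)$ and therefore irrelevant to the potential-game identity. If anything, the only thing worth stating carefully is that the identity must hold for \emph{every} choice of $p(-u)$, which is immediate since $p(-u)$ enters symmetrically on both sides and cancels. I would close by remarking that this structure is exactly why FMD-RA is a potential game: the bandwidth costs are ``private'' and additively separable, so they assemble into a global potential, whereas the interdependent privacy term contributes nothing to the potential and hence does not obstruct convergence of best-response dynamics.
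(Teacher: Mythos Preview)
Your proposal is correct and follows essentially the same approach as the paper's own proof: both observe that the privacy term of $\varphi_u$ depends only on $p(-u)$ and cancels in the difference, leaving $-f\cdot(M-in(u))\cdot(p(u)-p'(u))$ on each side. Your write-up is in fact a bit more explicit than the paper's, but the argument is identical.
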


\begin{proof}
    We prove Equation~\ref{eq:pot} by transforming both side to the same form. We start with the left side: the privacy related part of the utility does only depend on the other user's action, therefore this part falls out during subtraction. On the other hand the download related part accumulates as shown below. 
    \begin{align*}
        \varphi_u(p(u),(p(-u))-\varphi_u(p(u)^\prime,p(-u))=\\
        -f\cdot(in(u)+p(u) \cdot(M-in(u)))-(-f\cdot(in(u)+p(u)^\prime \cdot(M-in(u))))=\\
        -f\cdot p(u) \cdot(M-in(u))-(-f\cdot p(u)^\prime \cdot(M-in(u)))
    \end{align*}
    
    Coincidentally, we get the same result if we do the subtraction on the right side using the formula in Equation~\ref{eq:pot_FMD} as all element in the summation besides $u$ falls out (as they are identical because they do not depend on user $u$'s action). 
    \hfill\qed
\end{proof}
\section{Attacks on Privacy}\label{sec:attacks}

We show several possible attacks against the FMD scheme, that might be fruitful to be analyzed in more depth.

\paragraph{Intersection Attacks. }
The untrusted server could possess some background knowledge that it allows to infer that some messages were meant to be received by \emph{the same recipient}. In this case, the server only needs to consider the intersection of the anonymity sets of the ``suspicious'' messages. Suppose the server knows that $l$ messages are sent to the same user. In that case, the probability that a user is in the intersection of all the $l$ messages' anonymity sets is drawn from the $\mathsf{Binom}(U,p^{l})$ distribution. Therefore, the expected size of the anonymity set after an intersection attack is reduced to $p^{l}U$ from $pU$.

\paragraph{Sybil attacks. }
The collusion of multiple nodes would decrease the anonymity set of a message. For instance, when a message is downloaded by $K$ nodes out of $U$, and $N$ node is colluding, then the probability of pinpointing a particular message to a single recipient is $\frac{\binom{N+1}{K}}{\binom{U}{K}}$. This probability clearly increases as more node is being controlled by the adversary.
On the other hand, controlling more nodes does trivially increase the controller's privacy (not message-privacy but user-privacy) as well. However, formal reasoning would require a proper definition for both of these privacy notions.

\paragraph{Neighborhood attacks. }
Neighborhood attacks had been introduced by Zhou et al. in the context of deanonymizing individuals in social networks~\cite{zhou2011k}. An adversary who knows the neighborhood of a victim node could deanonymize the victim even if the whole graph is released anonymously. FMD is susceptible to neighborhood attacks, given that relationship anonymity can be easily broken with statistical tests. More precisely, one can derive first the social graph of FMD users and then launch a neighborhood attack to recover the identity of some users.

\end{document}